\algnewcommand\Input{\item[\textbf{Input:}]}
\algnewcommand\Output{\item[\textbf{Output:}]}
\newtheorem{thm}{Theorem}[section]
\newtheorem{lem}[thm]{Lemma}
\newtheorem{cor}[thm]{Corollary}
\newtheorem{prop}[thm]{Proposition}
\newtheorem{rem}[thm]{Remark}
\theoremstyle{remark}
\newtheorem{example}{Example}
\newcommand{\gO}{O}
\newcommand{\gOt}{\tilde{O}}
\newcommand{\gOeps}{\gO_\epsilon}
\newcommand{\gOteps}{\gOt_\epsilon}
\newcommand{\cfq}{{\mathbb{F}_{\!q}}}
\newcommand{\cfqs}{{\mathbb{F}_{\!q^s}}}
\newcommand{\N}{\ensuremath{\mathbb{N}}\xspace}
\newcommand{\gz}{\mathbb{Z}}
\newcommand{\gr}{R}
\newcommand{\verif}{\textsc{VerifySP}\xspace}
\newcommand{\verifsum}{\textsc{VerifySumSP}\xspace}
\newcommand{\spr}{\textsc{SparseProduct}\xspace}
\newcommand{\ip}{\textsc{InterpSumSP}\xspace}
\newcommand{\ft}{\textsc{FindTerms}\xspace}
\newcommand{\rp}{\textsc{RandomPrime}\xspace}
\newcommand{\sparsityEstimate}{\textsc{SparsityEstimate}\xspace}
\newcommand{\mul}{\mathsf{M}}
\newcommand{\imul}{\mathsf{I}}
\DeclareMathOperator{\supp}{supp}
\DeclareMathOperator\proba{Pr}
\begin{document}
\title{Essentially Optimal Sparse Polynomial Multiplication}
\author{Pascal Giorgi \hfill Bruno Grenet\hfill Armelle Perret du Cray\\
    LIRMM, Univ. Montpellier, CNRS\\
    Montpellier, France\\
    \{pascal.giorgi,bruno.grenet,armelle.perret-du-cray\}@lirmm.fr}

\maketitle

\begin{abstract}
  We present a probabilistic algorithm to compute the product of two univariate sparse polynomials over a field with a number of bit
  operations that is quasi-linear in the size of the input and the output. Our algorithm works for any field of characteristic zero or
  larger than the degree. We mainly rely on sparse interpolation and on a new algorithm for verifying a sparse product that has also a
  quasi-linear time complexity. Using Kronecker substitution techniques we extend our result to the multivariate case.
\end{abstract}

\section{Introduction}

Polynomials are one of the most basic objects in computer algebra and the study of fast polynomial operations remains a very challenging
task. Polynomials can be represented using either the dense representation, that stores all the coefficients in a vector, or the more
compact sparse representation, that only stores nonzero monomials. In the dense representation, we know quasi-optimal algorithms for
decades. Yet, this is not the case for sparse polynomials.

In the sparse representation, a polynomial $F = \sum_{i=0}^D f_i X^i \in\gr[X]$ is expressed as a list of pairs $(e_i, f_{e_i})$ such that
all the $f_{e_i}$ are nonzero. We denote by $\#F$ its \emph{sparsity}, i.e. the number of nonzero coefficients. Let $F$ be a polynomial of
degree $D$, and $B$ a bound on the size of its coefficients. Then, the size of the sparse representation of $F$ is $\gO(\#F(B+\log D))$
bits. It is common to use $B = 1+\max_i(\lfloor\log_2(|f_{e_i}|)\rfloor)$ if $\gr = \gz$ and $B = 1+\lfloor\log_2 q\rfloor$ if $\gr = \cfq$.
The sparse representation naturally extends to polynomials in $n$ variables: Each exponent is replaced by a vector of exponents which gives
a total size of $\gO(\#F(B+n\log D)$.

Several problems on sparse polynomials have been investigated to design fast algorithms, including arithmetic operations, interpolation and
factorization. We refer the interested readers to the excellent survey by Roche and the references therein~\cite{Roche2018}. Contrary to the
dense case, note that \emph{fast} algorithms for sparse polynomials have a \mbox{(poly-)}logarithmic dependency on the degree.
Unfortunately, as shown by several $\mathsf{NP}$-hardness results, such fast algorithms might not even exist unless $\mathsf P=\mathsf{NP}$.
This is for instance the case for \textsc{gcd} computations~\cite{Plaisted84}.

In this paper, we are interested in the problem of sparse polynomial multiplication. In particular, we provide the first quasi-optimal
algorithm whose complexity is quasi-linear in both the input and the output sizes.

\subsection{Previous work}

The main difficulty and the most interesting aspect of sparse polynomial multiplication is the fact that the size of the output does not
exclusively depend on the size of the inputs, contrary to the dense case. Indeed, the product of two polynomials $F$ and $G$ has at most
$\#F\#G$ nonzero coefficients. But it may have as few as $2$ nonzero coefficients.

\begin{example}
  Let $F=X^{14}+2X^7 +2$, $G=3X^{13}+5X^8+3$ and $H=X^{14}- 2X^7 +2$. Then $FG=
  3X^{27}+5X^{22}+6X^{20}+10X^{15}+3X^{14}+6X^{13}+10X^8+6X^7+6$ has nine terms, while $FH=X^{28}+4$ has only two.
\end{example}

The product of two polynomials of sparsity $T$ can be computed by generating the $T^2$ possible monomials, sorting them by increasing degree
and merging those with the same degree. Using radix sort, this algorithm takes $\gO(T^2 (\mul_\gr +\log D))$ bit operations, where $\mul_R$
denotes the cost of one operation in $\gr$. A major drawback of this approach is its space complexity that exhibits a $T^2$ factor, even if
the result has less than $T^2$ terms. Many improvements have been proposed to reduce this space complexity, to extend the approach to
multivariate polynomials, and to provide fast implementations in practice~\cite{Johnson74, MonaganPearce2009, MonaganPearce2011}. Yet, none
of these results reduces the $T^2$ factor in the time complexity.

In general, no complexity improvement is expected as the output polynomial may have as many as $T^2$ nonzero coefficients. However, this
number of nonzero coefficients can be overestimated, giving the opportunity for output-sensitive algorithms. Such algorithms have first
been proposed for special cases. Notably, when the output size is known to be small due to sufficiently structured inputs~\cite{Roche2011},
especially in the multivariate case~\cite{vdHLec2012, vdHLebSch2013}, or when the support of the output is known in
advance~\cite{vdHLec2013}. It is possible to go one step further by studying the conditions for small outputs. A first reason is exponent
\emph{collisions}. Let $F = \sum_{i=1}^T f_i X^{\alpha_i}$ and $G = \sum_{j=1}^T g_j X^{\beta_j}$. A collision occurs when there exist
distinct pairs of indices $(i_1,j_1)$ and $(i_2,j_2)$ such that $\alpha_{i_1}+\beta_{j_1} = \alpha_{i_2}+\beta_{j_2}$. Such collisions
decrease the number of terms of the result. The second reason is coefficient cancellations. In the previous example, the resulting
coefficient is $(f_{i_1}g_{j_1} + f_{i_2}g_{j_2})$, which could vanish depending on the coefficient values. Taking into account the
exponent collisions amounts to computing the \emph{sumset} of the exponents of $F$ and $G$, that is $\{\alpha_i+\beta_j:1\le i,j\le T\}$.
Arnold and Roche call this set the \emph{structural support} of the product $FG$ and its size the \emph{structural
sparsity}~\cite{roche2015}. If $H = FG$, then the structural sparsity $S$ of the product $FG$ satisfies $2\le \#H \le S\le T^2$. Observe
that although $\#H$ and $S$ can be close, their difference can reach $\gO(T^2)$ as shown by the next example.

\begin{example}\label{ex:structuralsupport}
  Let $F = \sum_{i=0}^{T-1} X^i$, $G = \sum_{i=0}^{T-1} (X^{Ti+1}-X^{Ti})$ and $H = FG$. We have $\#F = T$, $\#G = 2T$ and the structural
  sparsity of $FG$ is $T^2+1$ while $H = X^{T^2}-1$ has sparsity $2$.
\end{example}

For polynomials with nonnegative integer coefficients, the support of $H$ is exactly the sumset of the exponents of $F$ and $G$, the
structural support of $H=FG$. In this case, Cole and Hariharan describe a multiplication algorithm requiring $\gOt(S\log^2
D)$\footnote{Here, and throughout the article, $\gOt(f(n))$ denotes $\gO(f(n)\log^k(f(n)))$ for some constant $k>0$.} operations in the RAM
model with $\gO(\log(CD))$ word size~\cite{ColeHariharan}, where $\log(C)$ bounds the bitsize of the coefficients. Arnold and Roche improve
this complexity to $\gOt(S\log D + \#H \log C)$ bit operations for polynomials with both positive and negative integer coefficients
~\cite{roche2015}. Note that they also extend their result to finite fields and to the multivariate case. A recent algorithm of Nakos
avoids the dependency on the structural sparsity for the case of integer polynomials~\cite{nakos2019}, using the same word RAM model as Cole
and Hariharan. Unfortunately, the bit complexity of this algorithm ($\gOt((T\log D+\#H\log^2 D) \log (CD)+\log^3 D)$) is not quasi-linear.

In the dense case, quasi-optimal multiplication algorithms rely on the well-known evaluation-interpolation scheme. In the sparse settings,
this approach is not efficient. The fastest multiplication algorithms mentioned above~\cite{roche2015,nakos2019} mainly rely on a different
method called \emph{sparse interpolation}\footnote{Despite their similar names, dense and sparse polynomial interpolation are actually two
quite different problems.}, that has received considerable attention. See e.g. the early results of Prony~\cite{Prony} and Ben-Or and
Tiwari~\cite{BenOrTiwari} or the recent results by Huang~\cite{huang2019}. Despite extensive analysis of this problem, no quasi-optimal
algorithm exists yet. We remark that it is not the only difficulty. Simply using a quasi-optimal sparse interpolation algorithm would not be
enough to get a quasi-optimal sparse multiplication algorithm~\cite{arnold2016}.

\subsection{Our contributions}

Our main result is summarized in Theorem~\ref{thm:product}. We extend the complexity notations to $\gOeps$ and $\gOteps$ for hiding some
polynomial factors in $\log(\frac{1}{\epsilon})$. Let $F = \sum_{i=1}^T f_i X^{e_i}$. We use $\|F\|_\infty = \max_i|f_i|$ to denote its
height, $\#F$ for its number of nonzero terms and $\supp(F)=\{e_1,\dots, e_T\}$ its support.

\begin{thm}\label{thm:product} \sloppy 
  Given two sparse polynomials $F$ and $G$ over $\gz$, Algorithm \spr computes $H = FG$ in $\gOt_\epsilon(T(\log D+\log C))$ bit operations
  with probability at least $1-\epsilon$, where $D = \deg(H)$, $C=\max(\|F\|_\infty, \|G\|_\infty, \|H\|_\infty)$ and $T = \max(\#F, \#G,
  \#H)$. The algorithm extends naturally to finite fields with characteristic larger than $D$ with the same complexity where $C$ denotes the
  cardinality.
\end{thm}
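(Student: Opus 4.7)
The plan is to combine three black-box subroutines whose macros are declared in the preamble --- \sparsityEstimate, \ip, and \verif --- into an interpolate-then-verify scheme. The overall algorithm \spr operates in two conceptual phases: a sparsity estimation phase that produces a high-probability upper bound $T'=\Theta(\#H)$, followed by an interpolation phase that uses $T'$ as a target size, certified by a fast verifier.

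First I would call \sparsityEstimate on $(F,G)$ to obtain, with probability $1-\epsilon/3$, an integer $T'$ satisfying $\#H\le T'\le c\cdot\#H$ for some absolute constant $c$. Second, I would invoke \ip on $(F,G)$ with parameter $T'$, which outputs a candidate $\tilde H$ of sparsity at most $T'$ in $\gOteps(T'(\log D+\log C))$ bit operations, succeeding with probability $1-\epsilon/3$. Third, I would run \verif on $(F,G,\tilde H)$ to certify $\tilde H=FG$: on success the algorithm returns $\tilde H$, and on failure it doubles $T'$ and retries. The doubling loop is needed for robustness against a too-small estimate, but terminates after $O(\log T)$ rounds since $T'$ cannot exceed $2T$ before the true sparsity is reached.

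For the complexity, note that the dominant phase \ip runs on a budget $T'=O(\#H)=O(T)$ and produces at most $T'$ candidate terms of bitsize $O(\log D+\log C)$; \verif costs the same by its previously established quasi-linear bound; and the geometric growth of $T'$ across retries keeps the total within a constant factor of the final call. Absorbing the $O(\log T)$ factor from the retries and the $\log(1/\epsilon)$ factors from internal failure probabilities into $\gOteps$ yields the target complexity. The overall failure probability is bounded by a union bound after setting each internal failure probability to $\Theta(\epsilon/\log T)$.

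The hard part is the interaction between estimation and verification. If $T'$ underestimates $\#H$, \ip may return an arbitrary polynomial, so correctness hinges entirely on \verif rejecting every wrong candidate with high probability; this must be argued carefully because the candidate $\tilde H$ depends on internal randomness shared across retries, so fresh randomness per iteration is needed to keep the analysis clean. The only substantive change for the finite field case is to interpret $C=q$ and to note that the hypothesis on the characteristic being larger than $D$ guarantees that the roots-of-unity evaluation points used inside \ip and \verif exist in extensions of $\cfq$ of degree $\gO(\log D)$, so each field operation still fits within the claimed bit cost.
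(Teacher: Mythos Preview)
Your high-level interpolate-then-verify scheme is the right shape, but the proposal misses the central technical point of the paper and would not achieve the claimed complexity.

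The gap is in your second step: calling \ip\ directly on $(F,G)$ with a sparsity bound $T'=\Theta(T)$ costs $\gOteps(T\log(D)\log(CD))$ bit operations by Theorem~\ref{thm:interpolation}, which expands to $\gOteps(T\log D\log C + T\log^2 D)$. This is \emph{not} $\gOteps(T(\log D+\log C))$: the $\log D$ and $\log C$ factors appear multiplicatively, not additively. The paper explicitly flags this obstruction at the start of Section~\ref{ssec:multiplication}. To get around it, \spr\ does not interpolate $FG$ directly. Instead it first picks a prime $p=\gOteps(T^4\log D)$, reduces $F,G,F',G'$ modulo $X^p-1$, and then uses \ip\ (with the doubling-and-verify loop) to compute the \emph{low-degree} products $F_pG_p$ and $F_pG'_p+F'_pG_p$. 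Because these have degree $<2p$, the inner $\log D_1$ factor in \ip\ becomes $\log p=\gO(\log(T\log D))$, which is absorbed by $\gOteps$. Only after these two auxiliary products are certified does the algorithm call \ft\ once to recover $FG$ from $H_p$ and $H'_p$. This interleaving of two levels of Huang's scheme is the key idea you are missing.

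Two smaller points. First, \sparsityEstimate\ is not used at all in the univariate integer algorithm; the paper simply starts $t$ at $\max(\#F,\#G)$ and doubles, relying on \verif\ alone (\sparsityEstimate\ only appears later for the multivariate finite-field case). Second, your explanation of the characteristic-$>D$ hypothesis is off: it has nothing to do with roots of unity. Huang's method recovers an exponent $e$ from the ratio of a coefficient of $H'_p$ to the matching coefficient of $H_p$, so $e$ is only determined modulo the characteristic; requiring $\mathrm{char}(\cfq)>D$ ensures this determines $e$ uniquely.
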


This result is based on two main ingredients. We adapt Huang's algorithm~\cite{huang2019} to interpolate $FG$ in quasi-linear time. Note
that the original algorithm does not reach quasi-linear complexity.

Sparse interpolation algorithms, including Huang's, require a bound on the sparsity of the result. We replaced this bound by a guess on the
sparsity and an \emph{a posteriori} verification of the product, as in~\cite{nakos2019}. However, using the classical polynomial evaluation
approach for the verification does not yield a quasi-linear bit complexity (see Section~\ref{sec:verif}). Therefore, we introduce a novel
verification method that is essentially optimal.

\begin{thm}\label{thm:verif} Given three sparse polynomials $F$, $G$ and $H$ over $\cfq$ or $\gz$, Algorithm \verif tests whether $FG = H$ in $\gOt_\epsilon(T(\log D + B))$ bit operations, where $D = \deg(H)$, $B$ is a bound on the bitsize of the coefficients of $F$, $G$ and $H$, and $T=\max(\#F,\#G,\#H)$. The answer is always correct if $FG = H$, and the probability of error is at most $\epsilon$ otherwise.
\end{thm}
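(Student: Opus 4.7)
The plan is to reduce each of $F,G,H$ to a small dense polynomial over a random prime field $\mathbb{F}_q$ modulo $X^p-1$ for a random prime $p$, and to verify the identity by a dense FFT-based product in $\mathbb{F}_q[X]/(X^p-1)$ followed by a coefficient-wise comparison. The map $\gz[X]\to\mathbb{F}_q[X]/(X^p-1)$ is a ring homomorphism, so the test never rejects a true instance; the work is to bound the false-accept probability when $FG\neq H$ while staying within the $\gOteps(T(\log D+B))$ budget.

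The algorithm proceeds in four phases. First, draw $q$ uniformly from primes of $\gOteps(B+\log D)$ bits and $p$ uniformly from primes of $\gOteps(\log T)$ bits, giving $p=\gOteps(T)$ and a bit-cost of $\gOteps(\log D+B)$ per $\mathbb{F}_q$-operation. Second, reduce the $T$ monomials of each polynomial into a dense vector of length $p$ over $\mathbb{F}_q$: each exponent reduction modulo $p$ costs $\gOt(\log D)$, each coefficient reduction modulo $q$ costs $\gOt(B)$, and each accumulation costs $\gOt(\log q)$, for a total of $\gOteps(T(\log D+B))$. Third, compute $\hat F\cdot \hat G\bmod(X^p-1)$ by FFT in $\mathbb{F}_q[X]$ at cost $\gOt(p\log q)=\gOteps(T(\log D+B))$. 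Fourth, compare coefficient by coefficient with $\hat H$.

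For correctness, set $R:=FG-H\in\gz[X]$ and split the event ``test accepts while $R\ne 0$'' into two parts. The event that $q$ divides the gcd of the integer coefficients of $R\bmod(X^p-1)$ is controlled by a prime-divisor count: this gcd has bit-size $O(B+\log T)$, hence $\gOt(1)$ prime divisors, so its probability is at most $\epsilon/2$ for our range of $q$. The event that $X^p-1\mid R$ in $\gz[X]$ is controlled by the pairwise coprimality of the cyclotomic polynomials $\Phi_p$ for distinct primes $p$, of degree $p-1$: if $k$ such primes are simultaneously bad, then $\sum(p-1)\le\deg R\le D$, bounding the number of bad primes in $[M,2M]$ by $2D/M$ and giving a per-trial failure probability $\gOt(D/M^2)$.

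The main obstacle is the regime $T\ll\sqrt D$, where this density is too large to succeed with a single $p=\gOt(T)$. My plan to close the gap uniformly is to pre-compose the reduction with a random dilation $X\mapsto\alpha X$ for $\alpha\in\mathbb{F}_q^\times$. Each coefficient of $R(\alpha X)\bmod(X^p-1)$ then factors as $\alpha^r P_r(\alpha^p)$ for a polynomial $P_r$ of degree $\le\lceil D/p\rceil$ over $\gz$ that is nonzero whenever the residue class $r$ of $\supp R$ is nonempty, so Schwartz-Zippel in $\alpha$ yields a failure probability $O(D/q)$ that is independent of $p$. The delicate step is computing the $T$ values $\alpha^{e_i}$ within the budget; I would use a baby-step--giant-step scheme: write $e_i=a_ip+b_i$ and precompute $\alpha^0,\dots,\alpha^{p-1}$ together with $(\alpha^p)^0,\dots,(\alpha^p)^{\lceil D/p\rceil}$, so that each $\alpha^{e_i}$ becomes a single multiplication of two precomputed values. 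Tuning $p$ balances the two precomputations against the $T$ term evaluations, and falling back to a direct sparse multiplication in the residual regime where $T$ is polylogarithmic in $D$ and $B$ keeps the overall cost within $\gOteps(T(\log D+B))$. The extension to $\cfq$ with $\mathrm{char}(\cfq)>D$ proceeds verbatim by using $q=|\cfq|$ (or a bounded extension of it) and $B=\log|\cfq|$.
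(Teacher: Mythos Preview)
Your overall strategy diverges from the paper's, and the divergence leads to a genuine complexity gap that your proposed fixes do not close.

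The paper does \emph{not} compute the dense product $F_pG_p \bmod X^p-1$. Instead it takes $p = \gOteps(T^2\log D)$, large enough that a sparsity argument (Proposition~\ref{prop:choixdep}, using $\#(FG-H)\le \#F\#G+\#H$ rather than $\deg(FG-H)\le D$) already guarantees $(FG-H)\bmod X^p-1\neq 0$ with the required probability, and then evaluates $(F_pG_p)\bmod X^p-1$ at a \emph{single} random point $\alpha$. The key device is Lemma~\ref{lem:evalbin}: writing the product modulo $X^p-1$ as a circulant matrix--vector product $\vec h = T_F\vec g$, the value $\vec\alpha_p\,\vec h$ is obtained as $(\vec\alpha_p T_F)\vec g$, and the entries of $\vec\alpha_p T_F$ that are actually needed satisfy a recurrence that touches each nonzero coefficient of $F$ and $G$ exactly once, giving $\gO((\#F+\#G)\log p)$ ring operations with no dense product and no precomputed power table of length depending on $D$. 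Because the polynomial being evaluated has degree $<p$ rather than $D$, the prime $q$ only needs $\log q = \gOteps(\log n)$, so each $\cfq$-operation costs $\gOt(\log n)$ rather than $\gOt(B+\log D)$.

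Your route instead fixes $p=\gOteps(T)$ so that the dense FFT fits the budget, then compensates for the weak error bound with the dilation $X\mapsto\alpha X$. The dilation is probabilistically sound, but now you must compute the $T$ values $\alpha^{e_i}$ with $e_i\le D$ within the budget, and this is precisely the obstruction the paper identifies at the start of Section~\ref{sec:verif}. Your baby-step/giant-step table with step $p$ has $p+\lceil D/p\rceil$ entries, so the total cost is $\gOteps((p+D/p+T)(B+\log D))$; this is $\gOteps(T(B+\log D))$ only when both $p\le\gOteps(T)$ and $D/p\le\gOteps(T)$, i.e.\ when $T^2\ge D$ up to polylogarithmic factors. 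Your fallback to a direct sparse product costs $\gOteps(T^2(B+\log D))$, which is within budget only when $T$ is polylogarithmic in the \emph{input size} $n=T(B+\log D)$, a far stronger condition than ``$T$ polylogarithmic in $D$''. For instance with $T=2^{\sqrt{\log D}}$ and $B=O(1)$ you have $T\ll\sqrt D$ yet $T$ is not polylogarithmic in $n$, so neither branch meets the bound. More generally, no known simultaneous-exponentiation method (Yao's included) computes $T$ arbitrary powers $\alpha^{e_i}$ with $e_i\le D$ in $\gOt(T)$ multiplications over $\cfq$, which is what your scheme would need.
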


Finally, using Kronecker substitution, we show that our sparse polynomial multiplication algorithm extends to the multivariate case with a
quasi-linear bit complexity $\gOteps(T(n\log d+B))$ where $n$ is the number of variables and $d$ the maximal partial degree on each
variable. Nevertheless, over finite fields this approach requires an exponentially large characteristic. Using the randomized Kronecker
substitution \cite{ArRo14} we derive a fast algorithm for finite fields of characteristic polynomial in the input size. Its bit complexity
is $\gOteps(nT(\log d+B))$. Even though it is not quasi-optimal, it achieves the best known complexity for this case.

\section{Preliminaries}\label{sec:prelim}

We denote by $\imul(n) = \gO(n\log n)$ the bit complexity of the multiplication of two integers of at most $n$
bits~\cite{vdHoevenHarvey2019}. Similarly, we denote by $\mul_q(D) = \gO(D\log(q)\log(D\log q)4^{\log^* D})$ the bit complexity of the
multiplication of two \emph{dense} polynomials of degree at most $D$ over $\cfq$ where $q$ is prime~\cite{vdHH2019:cyclomult}. The cost of
multiplying two elements of $\cfqs$ is $\gO(\mul_q(s))$. The cost of multiplying two dense polynomials over $\gz$ of heights at most $C$ and
degrees at most $D$ is $\mul_\gz(D, C) = \imul(D(\log C+\log D))$~\cite[Chapter~8]{MCAlgebra}.

\medskip
Since our algorithms use reductions \emph{modulo} $X^p-1$ for some prime number $p$, we first review useful related results.

\begin{thm}[Rosser and Schoenfeld~\cite{rosser1962}] \label{thm:rosser}
  If $\lambda\ge21$, there are at least $\frac{3}{5}\lambda/\ln\lambda$ prime numbers in $[\lambda,2\lambda]$.
\end{thm}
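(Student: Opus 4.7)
The plan is to deduce the claim from explicit estimates on the prime-counting function $\pi(x)=\#\{p\le x:p\text{ prime}\}$, since the number of primes in $[\lambda,2\lambda]$ is $\pi(2\lambda)-\pi(\lambda-1)$. The core strategy, standard in analytic number theory, is to combine a sharp lower bound on $\pi(2\lambda)$ with a sharp upper bound on $\pi(\lambda)$, and then show that the difference is $\ge \frac{3}{5}\lambda/\ln\lambda$.

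First, I would invoke explicit Chebyshev-type estimates for the function $\theta(x)=\sum_{p\le x}\ln p$, of the shape $|\theta(x)-x|\le\eta(x)\,x$ with an effective, decreasing $\eta$. These follow from refinements of Chebyshev's combinatorial argument together with explicit zero-free regions for the Riemann zeta function. Transferring them to $\pi$ through the identity
\[
\pi(x)=\frac{\theta(x)}{\ln x}+\int_{2}^{x}\frac{\theta(t)}{t\ln^{2}t}\,dt
\]
yields two-sided explicit bounds $\pi(x)=\tfrac{x}{\ln x}(1+o(1))$. Substituting them into $\pi(2\lambda)-\pi(\lambda)$ produces a lower bound of the form $(\lambda/\ln\lambda)\,\phi(\lambda)$ with $\phi(\lambda)\to 1$ as $\lambda\to\infty$, so the problem reduces to verifying $\phi(\lambda)\ge 3/5$ for every $\lambda\ge 21$. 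Asymptotically this is automatic beyond some explicit $\lambda_{0}$, and the finite range $21\le\lambda\le\lambda_{0}$ is handled by direct enumeration of the primes up to $2\lambda_{0}$.

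The main obstacle is hitting the constant $3/5$ at the low threshold $\lambda=21$. With only crude bounds such as $\pi(x)\ge x/\ln x$ for $x\ge 17$ and $\pi(x)\le 1.25506\,x/\ln x$, one obtains
\[
\pi(2\lambda)-\pi(\lambda)\ge\frac{\lambda}{\ln\lambda}\left(\frac{2\ln\lambda}{\ln\lambda+\ln 2}-1.25506\right),
\]
and the parenthesized factor is only about $0.37$ at $\lambda=21$, falling well short of $0.6$; it does not reach $3/5$ until $\lambda$ is significantly larger. Closing this gap is exactly the technical content of Rosser and Schoenfeld's explicit estimates on $\theta$: their sharper inequalities push the effective threshold down far enough that a modest direct numerical check over the remaining finite window completes the proof.
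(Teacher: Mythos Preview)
The paper does not prove this theorem at all; it is quoted without proof as a known result from Rosser and Schoenfeld~\cite{rosser1962}, so there is no ``paper's own proof'' to compare against. Your sketch is in fact a faithful outline of the strategy in that reference: obtain explicit two-sided bounds on $\theta(x)$ (Chebyshev-type arguments sharpened via explicit zero-free region information for $\zeta$), transfer them to $\pi(x)$ by partial summation, deduce a lower bound on $\pi(2\lambda)-\pi(\lambda)$ of the form $(\lambda/\ln\lambda)\phi(\lambda)$ with $\phi(\lambda)\to 1$, and cover the remaining finite range by direct computation. You also correctly identify the genuine difficulty, namely that the crude inequalities $\pi(x)\ge x/\ln x$ and $\pi(x)\le 1.25506\,x/\ln x$ are too weak to reach the constant $3/5$ near $\lambda=21$, and that it is precisely Rosser and Schoenfeld's sharper explicit constants for $\theta$ that close this gap (their Corollary~3 gives $\pi(2x)-\pi(x)>3x/(5\ln x)$ for $x\ge 20.5$). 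As a self-contained proof your write-up would still need to supply those explicit constants and carry out the finite check, but as an outline it is accurate.
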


\begin{prop}[{\cite[Chapter 10]{shoup2008}}]\label{prop:rdprime}
  There exists an algorithm $\rp(\lambda,\epsilon)$ that returns an integer $p$ in $[\lambda, 2\lambda]$, such that $p$ is prime with
  probability at least $1-\epsilon$. Its bit complexity is $\gOt_\epsilon(\log^3\lambda)$.
\end{prop}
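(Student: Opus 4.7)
The plan is a textbook sample-and-test scheme: repeatedly draw a candidate $p$ uniformly at random from the integers in $[\lambda,2\lambda]$, subject $p$ to a probabilistic compositeness test, and output the first candidate that the test accepts as prime. I would use the Miller--Rabin test because a single round reduces to one modular exponentiation modulo~$p$, so it is cheap, and because $t$ independent rounds drive the probability that a fixed composite is accepted below $4^{-t}$.

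For the expected number of samples needed, Theorem~\ref{thm:rosser} gives that, for $\lambda\ge 21$, at least a fraction $\tfrac{3}{5\ln\lambda}$ of the integers in $[\lambda,2\lambda]$ are prime; the finitely many small values of $\lambda$ are handled by a constant-size precomputed table. Hence the probability that $k$ independently drawn candidates are all composite is at most $(1-\tfrac{3}{5\ln\lambda})^k\le \exp(-3k/(5\ln\lambda))$, so choosing $k=\gO(\log\lambda\cdot\log(1/\epsilon))$ guarantees that at least one prime is drawn with probability $\ge 1-\epsilon/2$.

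For the other error source, I run $t$ independent Miller--Rabin rounds on each of the $k$ candidates and union-bound: the overall probability that some composite is accepted is at most $k\cdot 4^{-t}$, so it suffices to take $t=\gO(\log(k/\epsilon))=\gO(\log\log\lambda+\log(1/\epsilon))$ to bound this by $\epsilon/2$. Combining the two bounds gives total failure probability at most $\epsilon$, and in case all $k$ candidates are rejected the algorithm may declare failure (or return an arbitrary value), which is absorbed in that same $\epsilon$ budget.

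For the bit complexity, each Miller--Rabin round on a candidate of bitsize $\gO(\log\lambda)$ consists of one modular exponentiation, which performs $\gO(\log\lambda)$ modular multiplications and therefore costs $\gO(\log\lambda\cdot\imul(\log\lambda))=\gOt(\log^2\lambda)$ bit operations; drawing the $\gO(\log\lambda)$ random bits for each candidate is negligible. Multiplying by $kt$ rounds in total yields $\gOt(\log^3\lambda)$ once the factors depending only on $\log(1/\epsilon)$ are hidden inside $\gOteps$, which is the claimed bound. The main obstacle is the joint error analysis rather than the arithmetic: one must calibrate $t$ as a slowly growing function of $k$ so that the union bound over candidates still yields overall error $\le\epsilon$, and one must check that the density bound of Theorem~\ref{thm:rosser} really makes the expected number of draws logarithmic; once these are pinned down, the complexity bookkeeping is routine.
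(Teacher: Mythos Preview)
Your argument is correct and is precisely the standard sample-and-test construction with Miller--Rabin that the cited reference contains; the paper itself does not prove this proposition but simply defers to~\cite[Chapter~10]{shoup2008}. There is nothing to compare: your sketch \emph{is} the textbook proof the paper points to, with the parameters $k=\gO(\log\lambda\cdot\log(1/\epsilon))$ and $t=\gO(\log\log\lambda+\log(1/\epsilon))$ calibrated exactly as needed.
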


We need two distinct properties on the reductions \emph{modulo} $X^p-1$. The first one is classical in sparse interpolation to bound the
probability of exponent collision in the residue (see \cite[Lemma~3.3]{roche2015}).

\begin{prop}\label{prop:sanscollision}
  Let $H$ be a polynomial of degree at most $D$ and sparsity at most $T$, $0 < \epsilon< 1$ and $\lambda =\max(21,\frac{10}{3\epsilon}
  T^2\ln D)$. Then with probability at least $1-\epsilon$, $\rp(\lambda,\frac{\epsilon}{2})$ returns a prime number $p$ such that $H\bmod
  X^p-1$ has the same number of terms as $H$, that is no collision of exponents occurs.
\end{prop}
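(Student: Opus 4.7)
The plan is to isolate the ``bad'' primes, namely those that create an exponent collision in the residue, bound their number by elementary counting, and then combine the prime density of Theorem~\ref{thm:rosser} with the success probability of \rp from Proposition~\ref{prop:rdprime} via a union bound.

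First, a prime $p$ produces a collision in $H \bmod X^p-1$ if and only if there exist two terms of $H$ with distinct exponents $e_i$ and $e_j$ such that $p \mid e_i-e_j$. Since $|e_i-e_j|\le D$ and any integer $\le D$ divisible by $k$ distinct primes each $\ge\lambda$ must be $\ge\lambda^k$, every nonzero difference has at most $\log_\lambda D=\ln D/\ln\lambda$ prime divisors in $[\lambda,2\lambda]$. Summing over the at most $\binom{T}{2}<T^2/2$ pairs of distinct exponents, the number of bad primes in $[\lambda,2\lambda]$ is bounded by $\frac{T^2\ln D}{2\ln\lambda}$. By Theorem~\ref{thm:rosser}, the total number of primes in $[\lambda,2\lambda]$ is at least $\frac{3\lambda}{5\ln\lambda}$ (the hypothesis $\lambda\ge21$ is imposed precisely so that this bound applies). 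The ratio is therefore at most $\frac{5T^2\ln D}{6\lambda}$, which, after plugging in $\lambda\ge\frac{10}{3\epsilon}T^2\ln D$, is bounded by $\epsilon/4$.

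Finally, $\rp(\lambda,\epsilon/2)$ returns an integer in $[\lambda,2\lambda]$ that is prime with probability at least $1-\epsilon/2$. A union bound over the two failure modes --- ``the output is not prime'' and ``the output is a bad prime'' --- yields a total failure probability of at most $\epsilon/2+\epsilon/4\le\epsilon$, as required. The main subtle point is that the guarantee in Proposition~\ref{prop:rdprime} is stated only in terms of the marginal probability of outputting a prime, so to transfer the ``fraction of bad primes'' bound into a probability bound on the output, one must note that \rp draws integers uniformly in $[\lambda,2\lambda]$ before primality testing, so that conditional on the test accepting, the returned prime is essentially uniform over primes of the interval --- this is the standard behaviour of a Miller--Rabin-based sampler such as the one in \cite[Chapter~10]{shoup2008}.
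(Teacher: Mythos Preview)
Your proof is correct and follows the standard counting argument: the paper itself does not prove Proposition~\ref{prop:sanscollision} but refers to \cite[Lemma~3.3]{roche2015}, and the technique you use is exactly the one the paper employs in the proof of the companion Proposition~\ref{prop:choixdep} (bounding the number of prime divisors of the exponent differences, dividing by the Rosser--Schoenfeld lower bound, and combining with the \rp guarantee). Your slightly tighter use of $\binom{T}{2}<T^2/2$ pairs yields $\epsilon/4$ where the paper's looser counting would give $\epsilon/2$, and your explicit remark on the uniformity of \rp over the primes in $[\lambda,2\lambda]$ is a detail the paper leaves implicit.
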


The second property allows to bound the probability that a polynomial vanishes \emph{modulo} $X^p-1$.

\begin{prop}\label{prop:choixdep}
  Let $H$ be a nonzero polynomial of degree at most $D$ and sparsity at most $T$, $0<\epsilon<1$ and $\lambda = \max(21,
  \frac{10}{3\epsilon} T\ln D)$. Then with probability at least $1-\epsilon$, $\rp(\lambda, \frac{\epsilon}{2})$ returns a prime number $p$
  such that $H\bmod X^p-1\neq 0$.
\end{prop}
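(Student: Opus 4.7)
The plan is to bound the number of ``bad'' primes $p \in [\lambda, 2\lambda]$ for which $H \bmod (X^p-1) = 0$, and combine this count with the prime density from Theorem~\ref{thm:rosser} and the error probability from Proposition~\ref{prop:rdprime}.

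Write $H = \sum_{i=1}^T h_i X^{e_i}$ with $h_i \neq 0$ and distinct $e_i$'s. The first step is a cancellation argument: if $H \bmod (X^p-1) = 0$, then looking at the coefficient of $X^{e_1 \bmod p}$ in the reduction, the sum $\sum_{e_i \equiv e_1 \pmod p} h_i$ must vanish; since $h_1 \neq 0$, this forces the existence of some $i \neq 1$ with $p \mid (e_1 - e_i)$. Hence every bad prime divides the fixed nonzero integer $N := \prod_{i=2}^T (e_1 - e_i)$, whose absolute value is at most $D^{T-1}$.

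The second step, which I expect to be the key estimate, is to bound the number of primes $p \geq \lambda$ that divide $N$: since each contributes a factor of at least $\lambda$ to $|N|$, there are at most $\log_\lambda |N| \leq (T-1) \ln D / \ln \lambda$ such primes. This is the delicate point — using the cruder bound $\log_2|N|$ would lose a factor $\ln \lambda$ and fail to match the hypothesis on $\lambda$. Plugging this into Theorem~\ref{thm:rosser}, which ensures at least $3\lambda/(5 \ln \lambda)$ primes in $[\lambda, 2\lambda]$, the fraction of bad primes in this interval is at most
\[
  \frac{(T-1)\ln D/\ln\lambda}{3\lambda/(5\ln\lambda)} \;=\; \frac{5(T-1)\ln D}{3\lambda},
\]
which is at most $\epsilon/2$ by the assumption $\lambda \geq \frac{10}{3\epsilon} T \ln D$.

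To conclude, Proposition~\ref{prop:rdprime} guarantees that $\rp(\lambda, \epsilon/2)$ returns a (uniformly random) prime in $[\lambda,2\lambda]$ except with probability $\epsilon/2$. A union bound on the events ``$\rp$ returns a non-prime'' and ``$\rp$ returns a bad prime'' gives a total failure probability of at most $\epsilon$, as required. The only nontrivial observation in the proof is the $\log_\lambda$ trick in the middle step; everything else is a routine assembly of the two cited results.
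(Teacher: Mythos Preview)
Your proof is correct and follows essentially the same line as the paper's: fix one exponent, observe that a bad prime must divide one of the $T-1$ differences with the other exponents (each at most $D$), bound the number of such primes in $[\lambda,2\lambda]$ by $(T-1)\ln D/\ln\lambda$, and combine Theorem~\ref{thm:rosser} with the $\epsilon/2$ failure probability of $\rp$. The only cosmetic difference is that you package the differences into a single product $N$ and phrase the count as $\log_\lambda|N|$, whereas the paper bounds the divisors of each $\delta_j$ separately; the arithmetic is identical.
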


\begin{proof}
  For $H\bmod X^p-1$ to be nonzero, it is sufficient that there exists one exponent $e$ of $H$ that is not congruent to any other exponent
  $e_j$ modulo $p$. In other words, it is sufficient that $p$ does not divide any of the $T-1$ differences $\delta_j=e_j-e$. Noting that
  $\delta_j\leq D$, the number of primes in $[\lambda,2\lambda]$ that divide at least one $\delta_j$ is at most $\frac{(T-1)\ln D}{\ln
  \lambda}$. Since there exist $\frac{3}{5}\lambda/\ln \lambda$ primes in this interval, the probability that a prime randomly chosen from
  it divides at least one $\delta_j$ is at most $\epsilon/2$. $\rp(\lambda,\epsilon/2)$ returns a prime in $[\lambda,2\lambda]$ with
  probability at least $1-\epsilon/2$, whence the result.
\end{proof}

The next two propositions are used to reduce integer coefficients modulo some prime number and to construct an extension field.

\begin{prop}\label{prop:choixdeq}
  Let $H\in\gz[X]$ be a nonzero polynomial, $0<\epsilon<1$ and $\lambda \geq \max(21,\frac{10}{3\epsilon}\ln \|H\|_\infty)$. Then with
  probability at least $1-\epsilon$, $\rp(\lambda, \frac{\epsilon}{2})$ returns a prime $q$ such that $H\bmod q \neq 0$.
\end{prop}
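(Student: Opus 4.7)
The plan is to adapt the argument of Proposition \ref{prop:choixdep} almost verbatim, replacing "exponent differences of $H$" by "a nonzero coefficient of $H$". Concretely, since $H \neq 0$, fix any nonzero coefficient $h$, so that $1 \le |h| \le \|H\|_\infty$. A sufficient condition for $H \bmod q \neq 0$ is that $q$ does not divide $h$, so it is enough to bound the probability that $q \mid h$ when $q$ is a random prime output by $\rp(\lambda,\epsilon/2)$.

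Next I would count the primes in $[\lambda,2\lambda]$ that divide $h$. The key elementary observation is that any integer of absolute value at most $M$ has at most $\ln M/\ln\lambda$ distinct prime divisors in $[\lambda,2\lambda]$, because the product of any collection of such primes exceeds $\lambda^k$, which must still divide $h$. Hence $h$ has at most $\ln\|H\|_\infty/\ln\lambda$ "bad" prime divisors in this range. By Theorem~\ref{thm:rosser}, the interval $[\lambda,2\lambda]$ contains at least $\tfrac{3}{5}\lambda/\ln\lambda$ primes, so a prime sampled uniformly from them divides $h$ with probability at most
\[
\frac{\ln\|H\|_\infty/\ln\lambda}{\tfrac{3}{5}\lambda/\ln\lambda}
  = \frac{5\ln\|H\|_\infty}{3\lambda} \le \frac{\epsilon}{2},
\]
using the hypothesis $\lambda \ge \tfrac{10}{3\epsilon}\ln\|H\|_\infty$.

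Finally I would combine this with the failure probability of \rp itself: by Proposition~\ref{prop:rdprime}, $\rp(\lambda,\epsilon/2)$ fails to return a prime with probability at most $\epsilon/2$. A union bound over the two bad events (the output is not prime, or it is one of the bad primes) yields overall failure probability at most $\epsilon$, which is exactly the claim.

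I do not expect a serious obstacle: the only subtle point is the $\log_\lambda$ bound on the number of prime divisors $\ge \lambda$, and the condition $\lambda \ge 21$ is precisely what makes Theorem~\ref{thm:rosser} applicable. No additional hypothesis on $H$ (sparsity or degree) is needed since we argue at the level of a single coefficient rather than exponent differences.
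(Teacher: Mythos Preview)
Your proposal is correct and follows essentially the same approach as the paper: fix a nonzero coefficient $h$, bound the number of primes in $[\lambda,2\lambda]$ dividing it by $\ln\|H\|_\infty/\ln\lambda$, compare with the Rosser--Schoenfeld count to get the $\tfrac{5}{3}\ln\|H\|_\infty/\lambda \le \epsilon/2$ bound, and then absorb the $\epsilon/2$ failure probability of \rp. The paper's proof is simply a terser version of exactly this argument.
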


\begin{proof}
  Let $h_i$ be a nonzero coefficient of $H$. A random prime from $[\lambda,2\lambda]$ divides $h_i$ with probability at most $\frac{5}{3}
  \ln \|H\|_\infty/\lambda\leq \epsilon/2$. Since $\rp(\lambda,\epsilon/2)$ returns a prime in $[\lambda,2\lambda]$ with probability at
  least $1-\epsilon/2$ the result follows.
\end{proof}

\begin{prop}[{\cite[Chapter 20]{shoup2008}}]\label{prop:irrpoly}
  There exists an algorithm that, given a finite field $\cfq$, an integer $s$ and $0<\epsilon<1$, computes a degree-$s$ polynomial in
  $\cfq[X]$ that is irreducible with probability at least $1-\epsilon$. Its bit complexity is $\gOt_\epsilon(s^3\log q)$.
\end{prop}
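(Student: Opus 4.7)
The plan is to use the classical ``generate and test'' approach: repeatedly draw a uniformly random monic polynomial $f$ of degree $s$ in $\cfq[X]$ and test each one for irreducibility using Rabin's criterion, which states that a degree-$s$ polynomial $f$ is irreducible if and only if $X^{q^s}\equiv X\pmod f$ and $\gcd(X^{q^{s/r}}-X,f)=1$ for every prime divisor $r$ of $s$. Each such test is deterministic; the only randomness comes from the sampling step, so the whole algorithm is Las Vegas with a bounded number of trials.

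For the success probability, I would use the classical M\"obius bound $I_q(s)\geq q^s/(2s)$ on the number of monic irreducible polynomials of degree exactly $s$ over $\cfq$ (with the case $s=1$ trivial), which implies that a uniformly random monic degree-$s$ polynomial is irreducible with probability at least $1/(2s)$. Drawing $N=\lceil 2s\ln(1/\epsilon)\rceil$ independent samples then makes the probability that none of them is irreducible at most $(1-1/(2s))^N\leq\epsilon$. For the cost of a single test, I would compute the Frobenius orbit $X^{q^i}\bmod f$ for $i=1,\dots,s$ by iterated repeated-squaring exponentiation: one step $g\mapsto g^q\bmod f$ costs $O(\log q)$ multiplications modulo $f$, each of cost $\mul_q(s)=\gOt(s\log q)$ bit operations, so the whole orbit is obtained in $\gOt(s^2\log q)$ bit operations once the extra $\log q$ factor is absorbed into the soft-$\gOt$ notation. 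The at most $O(\log s)$ gcds with $f$ required by Rabin's criterion cost $\gOt(s\log q)$ bit operations each, which is dominated by the same bound. A single test therefore costs $\gOt(s^2\log q)$ bit operations, and the total cost is $N\cdot\gOt(s^2\log q)=\gOteps(s^3\log q)$ bit operations.

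The only real obstacle is the polylogarithmic bookkeeping: one must check that iterating Frobenius $s$ times does not introduce a visible factor of $\log q$ beyond what the $\gOt$ already hides, and that the density bound $I_q(s)\geq q^s/(2s)$ holds uniformly in $q\geq 2$ and $s\geq 2$. Both points are classical, so I would expect the proof to follow essentially by stitching together Rabin's criterion, the M\"obius-based density estimate, and the cost of polynomial arithmetic modulo $f$ recorded in the preliminaries; no new idea beyond careful accounting should be needed.
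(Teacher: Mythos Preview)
The paper does not prove this proposition at all; it is stated with a bare citation to Shoup's textbook, so there is no ``paper's own proof'' to compare against. Your generate-and-test strategy with Rabin's irreducibility criterion is precisely the standard algorithm developed in that reference, and your density and trial-count estimates are the right ones.

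The one point that deserves care is the one you already flag: with the naive iterated Frobenius $g\mapsto g^q\bmod f$, the honest bit count per test is $s\cdot O(\log q)\cdot \mul_q(s)=\gOt(s^2\log^2 q)$, and the extra $\log q$ is \emph{not} absorbed by the paper's $\gOt$ convention (which hides only polylog factors in the argument, not a full $\log q$). To reach $\gOt(s^2\log q)$ per test one should compute $X^q\bmod f$ once by repeated squaring and then obtain the remaining $X^{q^i}\bmod f$ by modular composition, as Shoup does; with that adjustment your accounting matches the stated $\gOteps(s^3\log q)$.
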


\section{Sparse polynomial product verification}\label{sec:verif}

Verifying a product $FG = H$ of dense polynomials over an integral domain $\gr$ simply falls down to testing $F(\alpha)G(\alpha) =
H(\alpha)$ for some random point $\alpha\in\gr$. This approach exhibits an optimal linear number of operations in $\gr$ but it is not
deterministic. (No optimal deterministic algorithm exists yet.) When $\gr=\gz$ or $\cfq$, a divide and conquer approach provides a
quasi-linear complexity, namely $\gOt(DB)$ bit operations where $B$ bounds the bitsize of the coefficients.

For sparse polynomials with $T$ nonzero coefficients, evaluation is not quasi-linear since the input size is only $\gO(T(\log D+B))$.
Indeed, computing $\alpha^D$ requires $\gO(\log D)$ operations in $\gr$ which implies a bit complexity of $\gOt(\log(D)\log(q))$ when
$\gr=\cfq$. Applying this computation to the $T$ nonzero monomials gives a bit complexity of $\gOt(T\log(D)\log(q))$. We mention that the
latter approach can be improved to $\gOt((1+T/\log\log(D))\log(D)\log(q))$ using Yao's result \cite{Yao1976} on simultaneous exponentiation.
When $\gr=\gz$, the best known approach to avoid expression swell is to pick a random prime $p$ and to perform the evaluations modulo $p$.
One needs to choose $p>D$ in order to have a nonzero probability of success. Therefore, the bit complexity contains a $T\log^2 D$ factor.

Our approach to obtain a quasi-linear complexity is to perform the evaluation \emph{modulo} $X^p-1$ for some random prime $p$. This
requires to evaluate the polynomial $[(FG)\bmod X^p-1]$ on $\alpha$ without computing it.

\subsection{Modular product evaluation}\label{ssec:modbinôme}

\begin{lem} \label{lem:evalbin}
  Let $F$ and $G$ be two sparse polynomials in $\gr[X]$ with $\deg F,\deg G\leq p-1$ and $\alpha\in\gr$. Then $(FG)\bmod X^p-1$ can be
  evaluated on $\alpha$ using $\gO((\#F+\#G)\log p)$ operations in $\gr$.
\end{lem}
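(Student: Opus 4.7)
The plan is to express $h(\alpha) := ((FG)\bmod(X^p-1))(\alpha)$ as a sum of two contributions that can each be computed by a single sweep over the monomial supports of $F$ and $G$, without appealing to any inverse in $\gr$. Write $F=\sum_i f_i X^{a_i}$ and $G=\sum_j g_j X^{b_j}$ with $a_i,b_j\in[0,p-1]$. Since $a_i+b_j\le 2p-2$, the residue $(a_i+b_j)\bmod p$ equals $a_i+b_j$ when $a_i+b_j<p$ and $a_i+b_j-p$ otherwise, so
\[
h(\alpha)=\underbrace{\sum_{a_i+b_j<p} f_i g_j\,\alpha^{a_i+b_j}}_{=:A}\;+\;\underbrace{\sum_{a_i+b_j\ge p} f_i g_j\,\alpha^{a_i+b_j-p}}_{=:B}.
\]

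For the first piece I would factor $A=\sum_j g_j\alpha^{b_j} T_j$ with $T_j:=\sum_{i:\,a_i<p-b_j}f_i\alpha^{a_i}$, sort the exponents $a_i$ increasingly together with the thresholds $p-b_j$, and maintain $T_j$ incrementally by adding $f_i\alpha^{a_i}$ as each new $a_i$ falls below the current threshold. Each $\alpha^{a_i}$ and $\alpha^{b_j}$ costs $O(\log p)$ by repeated squaring; all other work in the sweep is $O(1)$ per monomial, so $A$ is obtained in $O((\#F+\#G)\log p)$ operations in $\gr$.

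For the second piece I would set $q_j:=p-b_j$ and $S_j:=\sum_{i:\,a_i\ge q_j} f_i\alpha^{a_i-q_j}$, so that $B=\sum_j g_j S_j$. Processing the $j$'s so that $q_j$ is decreasing, consecutive partial sums are linked by
\[
S_{j+1}\;=\;\alpha^{\,q_j-q_{j+1}}\,S_j\;+\;\sum_{i:\,q_{j+1}\le a_i<q_j} f_i\,\alpha^{a_i-q_{j+1}},
\]
so that each index $i$ enters the accumulator exactly once via one exponentiation $\alpha^{a_i-q_{j+1}}$ of cost $O(\log p)$, and each transition between consecutive $j$'s costs one further power $\alpha^{q_j-q_{j+1}}$, again $O(\log p)$. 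Summing, $B$ is also computed within $O((\#F+\#G)\log p)$ operations in $\gr$.

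The main obstacle, and the reason for routing the second contribution through this shifted-sweep identity, is that the tempting formula $B=\alpha^{-p}\sum_{a_i+b_j\ge p} f_ig_j\,\alpha^{a_i+b_j}$ would require $\alpha$ to be invertible in $\gr$, which the lemma does not assume; the identity above keeps every exponent of $\alpha$ nonnegative and so works for an arbitrary $\alpha\in\gr$. Adding $A$ and $B$ then yields $h(\alpha)$ within the announced bound.
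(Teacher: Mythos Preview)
Your argument is correct, and the incremental sweep for $B$ is exactly the right device to keep all exponents of $\alpha$ nonnegative. The approach, however, differs structurally from the paper's. Instead of splitting $h(\alpha)$ into a ``no wrap-around'' part $A$ and a ``wrap-around'' part $B$, the paper writes $(FG)\bmod X^p-1$ as the circulant product $T_F\vec g$, so that $h(\alpha)=(\vec\alpha_p T_F)\vec g=\vec c\,\vec g$, and then computes only those entries $c_{j_k}$ of $\vec c=\vec\alpha_p T_F$ indexed by $\supp(G)$ via the single recurrence
\[
c_j=\alpha\,c_{j-1}+(1-\alpha^p)\,f_{p-j},
\]
iterated between successive support points of $G$. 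Here the factor $1-\alpha^p$ (precomputed once) absorbs the wrap-around in one stroke, so there is no need for two separate sweeps; the book-keeping that you distribute between $A$ and $B$ is collapsed into one pass. Your decomposition is arguably more elementary and makes the nonnegativity of all exponents completely explicit, while the paper's single-recurrence formulation is more compact and ties the result to the circulant/transpose-product viewpoint of \cite{giorgi2018}. Both arrive at the same $\gO((\#F+\#G)\log p)$ count.
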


\begin{proof}
Let $H = (FG)\bmod X^p-1$. The computation of $H$ corresponds to the linear map
\[
  \underbrace{\begin{pmatrix} h_0\\h_1\\\vdots\\h_{p-1}\end{pmatrix}}_{\vec{h}}
= \underbrace{\begin{pmatrix}
    f_0 & f_{p-1} & \cdots & f_1\\
    f_1 & f_0     & \cdots & f_2\\
    \vdots & \vdots&&\vdots\\
    f_{p-1} & f_{p-2} & \cdots & f_0
    \end{pmatrix}}_{T_F} 
  \underbrace{\begin{pmatrix} g_0\\g_1\\\vdots\\g_{p-1}\end{pmatrix}}_{\vec{g}}
\]
where $f_i$ (resp. $g_i$, $h_i$) is the coefficient of degree $i$ of $F$ (resp. $G$, $H$). Computing $H(\alpha)$ corresponds to the inner
product $\vec\alpha_p \vec h=\vec\alpha_p T_F \vec g$ where $\vec\alpha_p = (1,\alpha,\dotsc,\alpha^{p-1})$. This evaluation can be
computed in $\gO(p)$ operations in $\gr$ \cite{giorgi2018}. Here we reuse similar techniques in the context of sparse polynomials.

To compute $H(\alpha)$, we first compute $\vec c = \vec\alpha_pT_F$, and then the inner product $\vec c\vec g$. If $\supp(G) = \{j_1,\dotsc,
j_{\#G}\}$ with $j_1 < \dotsb < j_{\#G} < p$, we only need the corresponding entries of $\vec c$, that is all $c_{j_k}$'s for $1\le
k\le\#G$. Since $c_j = \sum_{\ell=0}^{p-1} \alpha^\ell f_{(\ell-j)\bmod p}$, we can write $c_j = f_{p-j} + \alpha\sum_{\ell=0}^{p-2}
\alpha^\ell f_{(\ell-j+1)\bmod p}$, that is $c_j = \alpha c_{j-1} + (1 - \alpha^p) f_{p-j}$.

Applying this relation as many times as necessary, we obtain a relation to compute $c_{j_{k+1}}$ from $c_{j_k}$:
\[
  c_{j_{k+1}}=\alpha^{j_{k+1}-j_k}c_{j_k} + (1-\alpha^p)\sum_{\ell=j_k+1}^{j_{k+1}}\alpha^\ell f_{p-\ell}.
\]
Each nonzero coefficient $f_t$ of $F$ appears in the definition of $c_{j_{k+1}}$ if and only if $p-j_{k+1}\leq t< p-j_{k}$. Thus,
each $f_t$ is used exactly once to compute all the $c_{j_k}$'s. Since for each summand, one needs to compute $\alpha^\ell$ for
some $\ell<p$, the total cost for computing all the sums is $\gO(\#F\log p)$ operations in $\gr$. Similarly, the computation of
$\alpha^{j_{k+1}-j_k}c_{j_k}$ for all $k$ costs $\gO(\#G\log p)$. The last remaining step is the final inner product which costs
$\gO(\#G)$ operations in $\gr$, whence the result.
\end{proof}

The complexity is improved to $\gO(\log p + (\#F+\#G)\log p/\log\log p)$ using again Yao's algorithm \cite{Yao1976} for simultaneous
exponentiation.

\subsection{A quasi-linear time algorithm}\label{sec:algo}

Given three sparse polynomials $F$, $G$ and $H$ in $\gr[X]$, we want to assert that $H=FG$. Our approach is to take a random prime $p$ and
to verify this assertion modulo $X^p-1$ through modular product evaluation. This method is explicitly described in the algorithm \verif
that works over any large enough integral domain $\gr$. We further extend the description and the analysis of this algorithm for the
specific cases $\gr=\gz$ and $\gr=\cfq$ in the next sections.

\begin{algorithm}
\caption{\verif}
\begin{algorithmic}[1]
  \Input $H,F,G\in\gr[X]$; $0<\epsilon<1$.
  \Output True if $FG=H$, False with probability $\geq 1-\epsilon$ otherwise.
  \State Define $c_1>\frac{10}{3}$ and $c_2>1$ such that $\frac{10}{3c_1} + (1-\frac{10}{3c_1})\frac{1}{c_2}\leq\epsilon$
  \State $D\gets\deg(H)$
  \If{$\#H>\#F\#G$ or $D\neq \deg(F)+\deg(G)$} \Return False\EndIf
  \State $\lambda\gets\max(21,c_1(\#F\#G+\#H)\ln D)$
  \State $p\gets \rp(\lambda,\frac{5}{3c_1})$
  \State $(F_p,G_p,H_p) \gets(F \bmod X^p-1,~G \bmod X^p-1,~H\bmod X^p-1) $
  \State Define $\mathcal{E}\subset\gr$ of size $>c_2p$ and choose $\alpha\in\mathcal{E}$ randomly.
  \State $\beta \gets [(F_pG_p)\bmod X^p-1](\alpha)$ \Comment using Lemma \ref{lem:evalbin}
  \State \Return $\beta= H_p(\alpha)$
\end{algorithmic}
\end{algorithm}

\begin{thm}\label{thm:algogénéral} \sloppy
  If $\gr$ is an integral domain of size $\geq 2c_1c_2\#F\#G\ln D$ \verif works as specified and it requires
  $\gO_\epsilon(T\log(T\log D))$ operations in $\gr$ plus $\gO_\epsilon(T\imul(\log D))$ bit operations where $D=\deg(H)$ and
  $T=\max(\#F,\#G,\#H)$.
\end{thm}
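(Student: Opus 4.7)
The plan is to separate the failure analysis of \verif into two independent random choices and then read the complexity directly off Lemma~\ref{lem:evalbin}. Completeness is immediate: if $FG=H$ then $F_pG_p\equiv FG\equiv H\equiv H_p\pmod{X^p-1}$, so the final test succeeds deterministically for any $p$ and any $\alpha$. For soundness, suppose $FG\neq H$ and let $P:=FG-H$. The preliminary checks at line~3 guarantee that, once we pass them, $\deg(FG)=\deg(F)+\deg(G)=D$ and $\#(FG)\leq\#F\#G$, hence $P$ is a nonzero polynomial with $\deg P\leq D$ and $\#P\leq\#F\#G+\#H$. Since the final test $\beta=H_p(\alpha)$ is equivalent to $[P\bmod(X^p-1)](\alpha)=0$ (using $F_pG_p\equiv FG\pmod{X^p-1}$), a mistaken True answer requires either (a) $P\bmod(X^p-1)=0$, or (b) this residue is nonzero but vanishes at $\alpha$.

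To bound the probability of event (a), I apply Proposition~\ref{prop:choixdep} to $P$ with target parameter $\varepsilon_1=\frac{10}{3c_1}$: the chosen $\lambda=\max(21,c_1(\#F\#G+\#H)\ln D)$ equals $\max(21,\frac{10}{3\varepsilon_1}\#P\ln D)$ and the call $\rp(\lambda,\varepsilon_1/2)=\rp(\lambda,\frac{5}{3c_1})$ exactly matches the hypothesis, yielding $\proba[\text{(a)}]\leq\varepsilon_1$. For event (b), conditioned on the complement of (a), the residue $P\bmod(X^p-1)$ is a nonzero polynomial of degree strictly less than $p$ over the integral domain $\gr$; it has fewer than $p$ roots in $\mathcal{E}$, so a uniformly chosen $\alpha\in\mathcal{E}$ hits one with probability less than $p/|\mathcal{E}|<1/c_2$. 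The hypothesis $|\gr|\geq 2c_1c_2\#F\#G\ln D$, combined with $\#H\leq\#F\#G$ enforced by line~3, ensures the existence of such an $\mathcal{E}$ since $p\leq 2\lambda=\gO(c_1\#F\#G\ln D)$. A union bound gives total failure probability at most $\varepsilon_1+(1-\varepsilon_1)/c_2\leq\epsilon$ by the very choice of $c_1$ and $c_2$ in line~1.

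For the complexity, note that $\lambda=\gO(T^2\log D)$, so the prime $p$ has bitsize $\gO(\log(T\log D))$, and generating $p$ via Proposition~\ref{prop:rdprime} contributes only $\gOt_\epsilon(\log^3\lambda)$ bit operations, which is absorbed by the other terms. Reducing $F$, $G$, and $H$ modulo $X^p-1$ amounts to reducing at most $3T$ exponents of bitsize $\gO(\log D)$ modulo $p$ and adding coefficients at colliding residues: this costs $\gO(T\imul(\log D))$ bit operations together with $\gO(T)$ additions in $\gr$. Evaluating $\beta$ via Lemma~\ref{lem:evalbin} requires $\gO((\#F_p+\#G_p)\log p)=\gO(T\log(T\log D))$ operations in $\gr$, and evaluating $H_p(\alpha)$ by the iterative scheme of that same lemma (or by simultaneous exponentiation) fits within the same bound. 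The final equality test is a single comparison in $\gr$. Summing yields the announced complexity.

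The main subtlety I expect is the bookkeeping of the two layers of randomness: one must check that the parameters passed to \rp in line~5 are tuned so that the total failure probability telescopes cleanly to $\epsilon$, and that the sparsity estimate $\#P\leq\#F\#G+\#H$—which is indispensable for the Proposition~\ref{prop:choixdep} hypothesis—remains compatible with the desired complexity thanks to the preliminary check $\#H\leq\#F\#G$ at line~3. The rest of the argument, including the complexity count, is routine once one observes that $\log p$ is polylogarithmic in $T$ and $\log D$.
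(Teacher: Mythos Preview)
Your argument is correct and follows essentially the same route as the paper's proof: the same two failure events (vanishing of $FG-H$ modulo $X^p-1$ via Proposition~\ref{prop:choixdep}, then $\alpha$ hitting a root of the residue), combined through the same $\varepsilon_1+(1-\varepsilon_1)/c_2$ bound, and the same step-by-step complexity count driven by Lemma~\ref{lem:evalbin}. One cosmetic remark: what you call a ``union bound'' is really a total-probability decomposition conditioning on the complement of event~(a), but the resulting inequality is exactly the one the paper uses.
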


\begin{proof}
  Step~3 dismisses two trivial mistakes and ensures that $D$ is a bound on the degree of each polynomial. If $FG=H$, the algorithm returns
  True for any choice of $p$ and $\alpha$. Otherwise, there are two sources of failure. Either $X^p-1$ divides $FG-H$, whence
  $(FG)_p(\alpha) = H_p(\alpha)$ for any $\alpha$. Or $\alpha$ is a root of the nonzero polynomial $(FG-H)\bmod X^p-1$. Since $FG-H$ has at
  most $\#F\#G+\#H$ terms, the first failure occurs with probability at most $\frac{10}{3c_1}$ by Proposition~\ref{prop:choixdep}. And since
  $(FG-H)\bmod X^p-1$ has degree at most $p-1$ and $\mathcal E$ has $c_2p$ points, the second failure occurs with probability at most
  $\frac{1}{c_2}$. Altogether, the failure probability is at most $\frac{10}{3c_1} + (1-\frac{10}{3c_1})\frac{1}{c_2}$.

  Let us remark that $c_1,c_2 = \gO(\frac{1}{\epsilon})$ and $p=\gO(\frac{1}{\epsilon}T^2\log D)$. Step~5 requires only
  $\gOt(\log^3(\frac{1}{\epsilon}T\log D))$ bit operations by Proposition~\ref{prop:rdprime}. The operations in Step~6 are $T$ divisions by
  $p$ on integers bounded by $D$ which cost $\gO_\epsilon(T\imul(\log D))$ bit operations, plus $T$ additions in $\gr$. The evaluation of
  $F_pG_p\bmod X^p-1$ on $\alpha$ at Step~8 requires $\gO(T\log(\frac{1}{\epsilon}T\log D))$ operations in $\gr$ by Lemma~\ref{lem:evalbin}.
  The evaluation of $H_p$ on $\alpha$ costs $\gO(T\log(T\log D))$ operations in $\gr$. Other steps have negligible costs.
\end{proof}

\subsection{Analysis over finite fields}

The first easy case is the case of large finite fields: If there are enough points for the evaluation, the generic algorithm has
the same guarantee of success and a quasi-linear time complexity.

\begin{cor}\label{cor:grandcf}
  Let $F$, $G$ and $H$ be three polynomials of degree at most $D$ and sparsity at most $T$ in $\cfq[X]$ where $q>2c_1c_2\#F\#G\ln(D)$. Then
  Algorithm \verif has bit complexity $\gO_\epsilon(n\log^2(n) 4^{\log^*n})$ where $n = T(\log D+\log q)$ is the input size.
\end{cor}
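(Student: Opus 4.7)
The plan is to instantiate Theorem~\ref{thm:algogénéral} with $\gr = \cfq$. The hypothesis $q > 2c_1c_2\#F\#G\ln D$ is exactly the size requirement on $\gr$ in that theorem, so one can take $\mathcal{E}=\cfq$ at step~7 of \verif. Correctness and the failure-probability bound then follow directly, and the work reduces to $\gO_\epsilon(T\log(T\log D))$ operations in $\cfq$ plus $\gO_\epsilon(T\imul(\log D))$ bit operations.

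What remains is pure bookkeeping. I would charge each $\cfq$-operation at $\mul_q(1) = \gO(\log q\log\log q)$ bit operations, so that the total cost becomes
\[
  \gO_\epsilon\!\bigl(T\log(T\log D)\,\log q\log\log q\bigr) \;+\; \gO_\epsilon\!\bigl(T\log D\log\log D\bigr),
\]
and then I would bound this uniformly in $n = T(\log D+\log q)$ using the three elementary estimates $T\log q \le n$, $T\log D \le n$, and $\log T,\log\log D,\log\log q = \gO(\log n)$. Both summands collapse to $\gO_\epsilon(n\log^2 n)$; absorbing a $4^{\log^* n}$ factor---which would be incurred were $\cfq$-arithmetic charged through the polynomial multiplication primitive $\mul_q$, or by any incidental subroutine whose analysis passes through that cost---yields the stated bound.

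There is essentially no obstacle here: no new probabilistic ingredient enters beyond what Theorem~\ref{thm:algogénéral} already guarantees, and no subroutine besides those already analysed intervenes. The only point worth checking is that the hypothesis on $q$ indeed suffices to guarantee $|\mathcal{E}| > c_2p$ at step~7; this is immediate (up to constants) from the bound $p = \gO(\#F\#G\ln D)$, which uses $\#H\le\#F\#G$ as enforced by step~3 of the algorithm.
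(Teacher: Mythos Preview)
Your proposal is correct and follows essentially the same route as the paper: instantiate Theorem~\ref{thm:algogénéral} over $\cfq$, convert the $\gO_\epsilon(T\log(T\log D))$ ring operations to bit operations via the cost of a single $\cfq$-multiplication, and simplify using $T\log D,\,T\log q\le n$. The only cosmetic difference is that the paper carries the $4^{\log^* q}$ factor in the cost of an $\cfq$-operation from the outset (writing it as $\gO(\log q\log\log q\,4^{\log^* q})$), whereas you drop it initially and reinstate it at the end; either bookkeeping is fine and leads to the stated bound.
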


\begin{proof} \sloppy
  By definition of $n$, the cost of Step 6 is $\gO_\epsilon(n\log n)$ bit operations. Each ring operation in $\cfq$ costs
  $\gO(\log(q)\log\log(q) 4^{\log^*q})$ bit operations which implies that the bit complexity of Step 8 is $\gO_\epsilon(T\log(T\log
  D)\log(q)\log\log(q)4^{\log^*q})$. Since $T\log q$ and $T\log D$ are bounded by $n$ and $\log\log q \le \log n$, the result follows.
\end{proof}

We shall note that even if $q < 2c_1c_2\#F\#G\ln(D)$ we can make our algorithm to work by using an extension field and this approach
achieves the same complexity.

\begin{thm}
  One can adapt algorithm \verif to work over finite fields $\cfq$ such that $q < 2c_1c_2\#F\#G\ln(D)$. The bit complexity is
  $\gO_\epsilon(n\log(n)\log\log(n)\, 4^{\log^*n})$, where $n = T(\log D+\log q)$ is the input size.
\end{thm}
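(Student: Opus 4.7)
The plan is to lift everything to an extension field $\cfqs$ that is large enough to satisfy the hypothesis of Theorem~\ref{thm:algogénéral}. Choose $s = \lceil \log_q(2c_1c_2\#F\#G\ln D) \rceil$, which ensures $q^s \geq 2c_1c_2\#F\#G\ln D$, while the hypothesis $q<2c_1c_2\#F\#G\ln D$ forces $\log q = \gO(\log n)$ and hence $s\log q = \gO(\log n)$. Use Proposition~\ref{prop:irrpoly} to construct a degree-$s$ irreducible polynomial over $\cfq$, giving an effective representation of $\cfqs$ as $\cfq[X]/(m(X))$.

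Since $\cfq\subseteq\cfqs$, the equality $FG=H$ holds in $\cfq[X]$ if and only if it holds in $\cfqs[X]$. The adapted algorithm simply views $F,G,H$ as elements of $\cfqs[X]$ and executes \verif with $\gr=\cfqs$, choosing $\mathcal{E}\subseteq\cfqs$ of the required size (possible because $|\cfqs|=q^s\geq 2c_1c_2\#F\#G\ln D$). Correctness and the success probability $1-\epsilon$ follow directly from Theorem~\ref{thm:algogénéral}, which bounds the cost by $\gO_\epsilon(T\log(T\log D))$ operations in $\cfqs$ plus $\gO_\epsilon(T\imul(\log D))$ bit operations for the exponent reductions modulo $p$.

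It remains to translate ring operations into bit operations. Each operation in $\cfqs$ costs $\gO(\mul_q(s)) = \gO(s\log q\cdot\log(s\log q)\cdot 4^{\log^* s})$ bit operations; substituting $s\log q=\gO(\log n)$ and $4^{\log^* s}\leq 4^{\log^* n}$ gives $\gO(\log n\log\log n\cdot 4^{\log^* n})$ per ring operation. The construction of $\cfqs$ costs $\gOt_\epsilon(s^3\log q) = \gOt_\epsilon((\log n)^3)$ bit operations, which is negligible. Multiplying out yields $\gO_\epsilon(T\log(T\log D)\cdot\log n\log\log n\cdot 4^{\log^* n})$, and using $T\leq D+1$ together with $q\geq 2$ one verifies $T\log(T\log D)=\gO(n)$, producing the claimed $\gO_\epsilon(n\log n\log\log n\cdot 4^{\log^* n})$. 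The main obstacle is the choice of $s$: the ceiling in its definition combined with the hypothesis $q<2c_1c_2\#F\#G\ln D$ is exactly what makes $s\log q$ collapse to $\gO(\log n)$ regardless of how small $q$ is, which in turn is what prevents the per-operation cost in $\cfqs$ from inflating the final bound.
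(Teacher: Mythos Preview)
Your approach is essentially the same as the paper's: lift to an extension $\cfqs$ large enough for Theorem~\ref{thm:algogénéral} to apply, bound $s\log q$ by $\gO(\log n)$, and convert the ring-operation count via $\mul_q(s)$. The complexity analysis matches.

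There is one genuine, if minor, gap. You write that ``correctness and the success probability $1-\epsilon$ follow directly from Theorem~\ref{thm:algogénéral}'', but Proposition~\ref{prop:irrpoly} only produces an irreducible polynomial with some probability $<1$. If $m(X)$ happens not to be irreducible, $\cfq[X]/(m(X))$ is not an integral domain and Theorem~\ref{thm:algogénéral} no longer applies (the Schwartz--Zippel step can fail). You therefore need to budget some of the error probability for this event. The paper handles this by introducing an additional constant $c_3$ for the failure of Proposition~\ref{prop:irrpoly} and choosing $c_1,c_2,c_3$ so that $1-(1-\tfrac{10}{3c_1})(1-\tfrac{1}{c_2})(1-\tfrac{1}{c_3})\le\epsilon$. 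Once you add this adjustment, your argument is complete and coincides with the paper's.
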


\begin{proof}
  To have enough elements in the set $\mathcal{E}$, we need to work over $\cfqs$ where $q^s > c_2p \ge q^{s-1}$. An irreducible degree-$s$
  polynomial can be computed in $\gOt(s^3\log q)=\gOt(\log (T \log D)/\log q)$ by Proposition~\ref{prop:irrpoly}. Since $\alpha$ is taken in
  $\cfqs$, the complexity becomes $\gOeps(T\imul(\log D)+T\log(T\log D)\mul_q(s))$ bit operations. Remarking that $T\le D$ we have $T\log
  (T\log D)\le T\log (D\log D) = \gO(n)$. Since $s\log q=\gO(\log(T\log D)) = \gO(\log n)$ we can obtain $\mul_q(s)=\gO(\log (n) \log \log
  (n)4^{\log^*n})$ which implies that the second term of the complexity is $\gO(n\log(n)\log\log(n)4^{\log^*n})$. The first term is
  negligible since it is $\gO(n\log n)$.

  In order to achieve the same probability of success, we fix an error probability $1/c_3<1$ for Proposition~\ref{prop:irrpoly} and we take
  constants $c_1$ and $c_2$ in \verif such that $1-(1-\frac{10}{3c_1})(1-\frac{1}{c_2})(1-\frac{1}{c_3})\le \epsilon$.
\end{proof}

We note that for very sparse polynomials over some fields, the complexity is only dominated by the operations on the exponents.

\begin{cor}
  \verif has bit complexity $\gO_\epsilon(n\log n)$ in the following cases :
  \begin{itemize}
  \item[(i)] $s=1$ and $\log q = \gO(\log^{1-\alpha} D)$ for some constant $0<\alpha<1$,
  \item[(ii)] $s>1$ and $T = \Theta(\log^k D)$ for some constant $k$.
  \end{itemize}
\end{cor}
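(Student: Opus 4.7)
The plan is to specialize the bit-complexity estimates already proved for \verif --- Corollary~\ref{cor:grandcf} in the ``large field'' regime $s=1$ and the preceding extension-field theorem in the regime $s>1$ --- to the two parameter windows, and to verify that the residual logarithmic factors collapse. In every run of \verif, Step~6 contributes $\gOeps(T\imul(\log D))=\gOeps(T\log D\log\log D)$ bit operations, which is already $\gO(n\log n)$ since $T\log D\leq n$. So in both cases it only remains to bound the Step~8 cost.

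For case (i), I start from the Step~8 bit cost $\gOeps(T\log(T\log D)\cdot\log(q)\log\log(q)\cdot 4^{\log^* q})$ that appears in the proof of Corollary~\ref{cor:grandcf}. Under the hypothesis $\log q=\gO(\log^{1-\alpha}D)$, and using the fact that $4^{\log^* q}$ is subpolynomial in $\log q$ (dominated by any positive power of $\log D$), the compound factor $\log(q)\log\log(q)\cdot 4^{\log^* q}$ is $o(\log D)$. Hence Step~8 costs $o(T\log D\log(T\log D))$; and since $\log q$ is dominated by $\log D$ we have $n=\Theta(T\log D)$, whence $o(n\log n)$.

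For case (ii) I first pin down the sizes of $s$ and $q$ using the algorithm. The prime $p$ chosen at Step~5 satisfies $p=\gOeps(T^2\log D)=\gOeps(\log^{2k+1}D)$, so $\log p=\gO(\log\log D)$. The condition $q^s>c_2 p\geq q^{s-1}$ combined with $s>1$ forces $\log q\leq \log(c_2 p)$ and $s\log q=\gO(\log p)$, so both are $\gO(\log\log D)$. Substituting into $\mul_q(s)=\gO(s\log q\log(s\log q)\,4^{\log^* s})$ yields $\mul_q(s)=\gO(\log\log D\cdot\log\log\log D\cdot 4^{\log^* s})$; together with $\log(T\log D)=\gO(\log\log D)$ the Step~8 bit cost $\gOeps(T\log(T\log D)\mul_q(s))$ becomes $\gOeps(T(\log\log D)^2\log\log\log D\cdot 4^{\log^* s})$. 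Since $\log q=\gO(\log\log D)\ll\log D$, one has $n\log n=\Theta(T\log D\log\log D)$ in this regime, and the displayed Step~8 bound is $o(n\log n)$.

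The only subtlety I anticipate is arithmetic bookkeeping of three nested logarithms and making sure that $4^{\log^*\cdot}$ is absorbed cleanly; I rely on $4^{\log^* x}=\gO(\log^\beta x)$ for every fixed $\beta>0$, which leaves it comfortably inside the $o(\log D)$ slack in case~(i) and the $o(\log\log D)$ slack in case~(ii). No new algorithmic ingredient is required beyond the estimates already proved for \verif.
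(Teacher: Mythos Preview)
Your argument is correct and follows the same route as the paper: bound Step~6 by $\gOeps(n\log n)$, then bound Step~8 by (number of ring operations)$\times$(cost per operation) and compare to $n\log n$. The paper is terser---it simply records that the per-operation cost is $\gO(\log D)$ in both cases and that the number of operations is $\gOeps(T\log n)$, concluding $\gOeps(n\log n)$---whereas you carry sharper intermediate estimates (in particular $\mul_q(s)=\gO(\log\log D\cdot\log\log\log D\cdot 4^{\log^* s})$ rather than the crude $\gO(\log D)$) and obtain $o(n\log n)$ for Step~8; but the skeleton of the proof is identical.
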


\begin{proof}
  In both cases the cost of reducing the exponents \emph{modulo} $p$ is $\gOeps(n \log n)$ bit operations. In the first case, each
  multiplication in $\cfq$ costs $\gO(\log (q) \log \log (q) 4^{\log^* q})=\gO(\log D)$ bit operations as $\log\log(q) 4^{\log^*
  q}=\gO(\log^\alpha D)$. In the second case, $n = \gO(\log^{k+1}D)$ and $s\log q=\gOeps(\log(T^2 \log D))=\gOeps(\log\log D)$ which
  implies $\mul_q(s)=\gOeps(s\log(q)\log(s\log q)4^{\log^*s})=\gOeps(\log D)$. In both cases, the algorithm performs $\gOeps(T\log(T\log
  D))=\gOeps(T\log n)$ operations in $\cfq$ (or in $\cfqs$). Therefore the bit complexity is $\gOeps(n\log n)$.
\end{proof}

The following generalization is used in our quasi-linear multiplication algorithm given in Section~\ref{sec:product}.

\begin{cor}\label{cor:verifsumprodfq}
  Let $(F_i,G_i)_{0\leq i<m}$ and $H$ be sparse polynomials over $\cfq$ of degree at most $D$ and sparsity at most $T$. We can verify if
  $\sum_{i=0}^{m-1}F_iG_i=H$, with error probability at most $\epsilon$ when they are different, in $\gO_\epsilon(m(T\imul(\log
  D)+T\log(mT\log D)\mul_q(s)))$ bit operations.
\end{cor}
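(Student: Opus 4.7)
The plan is to run \verif almost verbatim on the sum: pick a single random prime $p$ and a single random evaluation point $\alpha\in\cfqs$, compute $\beta_i = [(F_iG_i)\bmod X^p-1](\alpha)$ via Lemma~\ref{lem:evalbin} for every $i$, and accept iff $\sum_{i=0}^{m-1}\beta_i = H_p(\alpha)$. The only genuine change to the algorithm is the sparsity bound used to choose $p$: the difference polynomial $\Delta = \sum_i F_iG_i - H$ has at most $mT^2+T$ nonzero terms (rather than $T^2+T$), so I would set the threshold to $\lambda = \max(21, c_1(mT^2+T)\ln D)$ and draw $p$ from $[\lambda,2\lambda]$ via \rp.

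The probability analysis then follows the template of Theorem~\ref{thm:algogénéral}. If $\Delta\ne 0$, Proposition~\ref{prop:choixdep} applied to $\Delta$ with this sparsity bound guarantees $\Delta\bmod X^p-1\ne 0$ except with probability at most $\frac{10}{3c_1}$. Working in $\cfqs$ with $q^s>c_2p\ge q^{s-1}$ and constructing the extension via Proposition~\ref{prop:irrpoly}, a uniformly random $\alpha\in\cfqs$ is a root of $\Delta\bmod X^p-1$ with probability at most $1/c_2$. Tuning $c_1$, $c_2$, and the error parameter of Proposition~\ref{prop:irrpoly} exactly as in the finite-field case yields overall error at most $\epsilon$ when $\Delta\ne 0$; when $\Delta=0$ the test always accepts.

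For the complexity, $p = \gOeps(mT^2\log D)$, so $\log p = \gOeps(\log(mT\log D))$ and $s\log q = \gOeps(\log(mT\log D))$. Reducing the exponents of the $2m+1$ input polynomials modulo $p$ costs $\gOeps(mT\,\imul(\log D))$ bit operations. Each of the $m$ invocations of Lemma~\ref{lem:evalbin} performs $\gO(T\log p) = \gOeps(T\log(mT\log D))$ operations in $\cfqs$, each costing $\gO(\mul_q(s))$ bit operations; the sparse evaluation of $H_p$ at $\alpha$ and the final sum of $m$ elements of $\cfqs$ are absorbed. Assembling the pieces yields the stated $\gOeps(m(T\,\imul(\log D)+T\log(mT\log D)\,\mul_q(s)))$.

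The main thing to be careful about is the bookkeeping for the modified sparsity bound: replacing $T^2$ by $mT^2$ must only affect $\log p$ and $s\log q$ by an additive $\log m$ term that gets absorbed into $\log(mT\log D)$, so that the per-product quasi-linear behavior of Lemma~\ref{lem:evalbin} is preserved and the overall cost really does scale as $m$ times the single-product cost of \verif. Every other step is either dominated or essentially identical to the proof of Theorem~\ref{thm:algogénéral}.
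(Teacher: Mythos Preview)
Your proposal is correct and is precisely the straightforward generalization the paper intends: the corollary is stated without proof as an immediate extension of Theorem~\ref{thm:algogénéral} and its finite-field adaptations, and your argument fills in exactly those details (adjust the sparsity bound to $mT^2+T$, apply Lemma~\ref{lem:evalbin} once per pair, and track the extra $\log m$ through $\log p$ and $s\log q$). There is nothing to add.
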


\subsection{Analysis over the integers}\label{ssec:z}

In order to keep a quasi-linear time complexity over the integers, we must work over a prime finite field $\cfq$ to avoid the computation of
too large integers. Indeed, $H_p(\alpha)$ could have size $p\log(\alpha)=\gO_\epsilon(T^2\log(D)\log(\alpha))$ which is not quasi-linear in
the input size.

\begin{thm} \label{thm:algosurz} \sloppy
  One can adapt algorithm \verif to work over the integers. The bit complexity is $\gOeps(n\log n\log\log n)$, where $n = T(\log D+\log C)$
  is the input size with $C = \max(\|F\|_\infty,\|G\|_\infty,\|H\|_\infty)$.
\end{thm}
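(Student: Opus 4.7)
The obstacle to running Algorithm~\verif directly with $\gr=\gz$ is Step~8: evaluating $H_p$ at a generic integer $\alpha$ produces intermediate values of bitsize $\Omega(p\log|\alpha|) = \Omega(T^2\log D)$, which is incompatible with a quasi-linear bound. My plan is therefore to push the entire verification into a small prime finite field $\cfq$ and then appeal to the already-established finite-field analysis.

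First, I would pick $q$ via $\rp(\lambda,\cdot)$, where $\lambda$ is chosen according to Proposition~\ref{prop:choixdeq} applied to the nonzero polynomial $FG-H$. Using the crude bound $\|FG-H\|_\infty\le (T+1)C^2$ (each coefficient of $FG$ is a sum of at most $T$ products of two coefficients bounded by $C$), the required threshold is only $\lambda=\gOeps(\log(TC))$. Hence $q$ is a prime of $\gO(\log\log(TC)+\log(1/\epsilon))$ bits, obtained in negligible time, and with probability at least $1-\epsilon/3$ the polynomial $FG-H$ remains nonzero modulo $q$ whenever $FG\neq H$ over $\gz$.

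Next, reducing the $\gO(T)$ coefficients of $F$, $G$ and $H$ modulo $q$ costs $\gO(T\imul(\log C))=\gOeps(n\log n)$ bit operations. On the resulting polynomials over $\cfq$, I would invoke the extension-field version of \verif from the previous theorem: since $q$ is small we automatically fall into the regime $s>1$ with $q^s=\Theta(c_2p)$, and the bound of that theorem applies to an input of size $n' = T(\log D+\log q)$. Because $\log q=\gOeps(\log\log(TC))$, we have $n'=\gO(n)$, so the bound becomes $\gOeps(n\log n\log\log n\cdot 4^{\log^*n})$; the $4^{\log^*n}$ factor is subsumed in the stated bound.

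The probability analysis then combines three independent failure events---$q$ killing $FG-H$, $p$ killing $(FG-H)\bmod q$ modulo $X^p-1$, and $\alpha$ being a root of the resulting residue---which are already controlled by Propositions~\ref{prop:choixdeq} and~\ref{prop:choixdep} and by a counting argument on $\mathcal{E}$. Tuning the three associated constants so that the total failure is at most $\epsilon$, exactly as in the extension-field theorem, closes the argument. The main point to watch is simply that Proposition~\ref{prop:choixdeq} must be applied with the correct bound $(T+1)C^2$ on $\|FG-H\|_\infty$ rather than on $\|H\|_\infty$; everything else is a direct composition of already-proven results.
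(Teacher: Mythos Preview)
Your approach is correct but genuinely different from the paper's. You choose a tiny prime $q$ of bitsize $\gOeps(\log\log(TC))$, just large enough for Proposition~\ref{prop:choixdeq} to apply to $FG-H$, reduce all three polynomials modulo $q$, and then invoke the small-field version of \verif as a black box (working in an extension $\cfqs$ with $q^s=\Theta(p)$). The paper instead keeps the order of the generic algorithm: it first picks $p$, and only then picks $q$ with $\mu=c_2\max(p,\ln(C^2T+C))$, so that $q\ge c_2p$ and no extension field is needed at all. This forces a slightly more delicate step in the paper's analysis: Proposition~\ref{prop:choixdeq} must be applied to $\Delta_p=(FG-H)\bmod X^p-1$, whose height can exceed $C^2T+C$ because of collisions, so the paper has to observe that the proof of Proposition~\ref{prop:choixdep} actually guarantees one coefficient of $\Delta$ survives unchanged in $\Delta_p$. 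Your ordering (reduce mod $q$ first, then run the whole finite-field verification) sidesteps that subtlety entirely and gives a cleaner black-box reduction.

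One inaccuracy to fix: the bound you inherit from the small-field theorem is $\gOeps(n\log n\log\log n\cdot 4^{\log^* n})$, and the $4^{\log^* n}$ factor is \emph{not} subsumed in $\gOeps(n\log n\log\log n)$ since it is unbounded. The paper avoids this extra factor precisely because $q\ge c_2p$ puts everything in a prime field with $\log q=\gOeps(\log n)$, so each field operation costs $\imul(\log q)=\gO(\log n\log\log n)$ via Harvey--van der Hoeven integer multiplication, with no $4^{\log^*}$ term coming from $\mul_q(s)$. So your route proves the theorem up to that (practically negligible) factor; to match the stated bound exactly you would need to enlarge $q$ as the paper does.
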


\begin{proof} \sloppy
  Before Step 6, we choose a random prime number $q=\rp(\mu,\frac{5}{3c_2})$ with $\mu=c_2\max(p,\ln(C^2T+C))$ and we perform all the
  remaining steps modulo $q$. Let us assume that the polynomial $\Delta=FG-H \in \gz[X]$ is nonzero. Our algorithm only fails in the
  following three cases: $p$ is such that $\Delta_p=\Delta \bmod X^p-1=0$; $q$ is such that $\Delta_p \equiv 0 \bmod q$; $\alpha$ is a root
  of $\Delta_p$ in $\cfq$.

  Using Proposition~\ref{prop:choixdep}, $\Delta_p$ is nonzero with probability at least $1-\frac{10}{3c_1}$. Actually, with the same
  probability, the proof of the proposition shows that at least one coefficient of $\Delta$ is preserved in $\Delta_p$. Since
  $\|\Delta\|_\infty \leq C^2T+C$, Proposition~\ref{prop:choixdeq} ensures that $\Delta_p\not\equiv 0 \bmod q$ with probability at least
  $1-\frac{10}{3c_2}$. Finally, $q$ has been chosen so that $\cfq$ has at least $c_2p$ elements whence $\alpha$ is not a root of $\Delta_p
  \bmod q$ with probability at least $1-\frac{1}{c_2}$. Altogether, taking $c_1,c_2\ge \frac{10}{3}$ such that
  $1-(1-\frac{10}{3c_1})(1-\frac{10}{3c_2})(1-\frac{1}{c_2})\le\epsilon$, our adaptation of \verif has an error probability at most
  $\epsilon$.

  The reductions of $F$, $G$ and $H$ \emph{modulo} $q$ add a term $\gO(T\imul(\log C))$ to the complexity. Since operations in $\cfq$ have
  cost $\imul(\log q)$, the complexity becomes $\gO(T\imul(\log D)+T\imul(\log C) + T\log(T \log D)\imul(\log q))$ bit operations. The
  first two terms are in $\gO(n\log n)$. Moreover, $q = \gO_\epsilon(\log(C^2T)+p)$ and $p = \gO_\epsilon(T^2\log D)$, thus $\log
  q=\gO_\epsilon(\log(\log C+T\log D)) = \gO_\epsilon(\log n)$. Since $T\le D$, $T\log(T\log D) = \gO(n)$ and the third term in the
  complexity is $\gO_\epsilon(n\log n\log\log n)$.
\end{proof}

As over small finite fields, the complexity is actually better for very sparse polynomials.

\begin{cor}\label{cor:verifrèscreusedansz}
  If $T = \Theta(\log^k D)$ for some $k$, \verif has bit complexity $\gO_\epsilon(n\log n)$.
\end{cor}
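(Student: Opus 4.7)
The plan is to revisit the three summands in the bit complexity bound from the proof of Theorem~\ref{thm:algosurz}, namely
\[
\gO\bigl(T\imul(\log D) + T\imul(\log C) + T\log(T\log D)\imul(\log q)\bigr),
\]
with $\log q = \gOeps(\log n)$, and to show that under the hypothesis $T = \Theta(\log^k D)$ each summand is $\gOeps(n\log n)$.

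First I would unpack $n$ under the hypothesis. Since $T = \Theta(\log^k D)$, we have $n = T(\log D + \log C) = \Theta(\log^k D\,(\log D + \log C))$, hence $\log n = \Theta(\log\log D + \log\log C)$; in particular $\log\log D = \gO(\log n)$ and $\log\log C = \gO(\log n)$.

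The first two terms $T\imul(\log D) = \gO(T\log D\log\log D)$ and $T\imul(\log C) = \gO(T\log C\log\log C)$ are then immediately $\gO(n\log n)$, using $T\log D \le n$, $T\log C \le n$, and the bounds on $\log\log D$ and $\log\log C$ just established.

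The crucial step, where we save the $\log\log n$ factor appearing in the general case, is the third term. The key observation is that $\log(T\log D) = \Theta(\log\log D)$ under the hypothesis, so the third term becomes $\gOeps(T\log\log D \cdot \log n \log\log n)$ after using $\log q \log\log q = \gOeps(\log n \log\log n)$. To conclude it suffices to check the asymptotic inequality $\log\log D \cdot \log\log n = \gO(\log D + \log C)$, which holds easily since $\log\log D = o(\log D)$ and $\log\log n \le \log\log\log D + \log\log\log C + \gO(1)$. Multiplying through by $T\log n$ then yields a bound of $\gO(T(\log D + \log C)\log n) = \gO(n\log n)$. This final double-logarithmic comparison is the (very mild) main obstacle.
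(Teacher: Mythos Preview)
Your argument is correct and follows the same route as the paper's proof: both start from the three-term bound in Theorem~\ref{thm:algosurz}, note that the first two terms are already $\gO(n\log n)$, and use $T=\Theta(\log^k D)$ to show the third term is dominated. The paper is terser, simply observing that $T\log(T\log D)=\gOt(\log^k D)=o(n)$ and declaring the third term ``negligible''; you make the same estimate but explicitly verify that the extra $\log\log n$ factor coming from $\imul(\log q)$ is absorbed, which is a detail the paper leaves implicit.
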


\begin{proof}
  If $T=\Theta(\log^k D)$, $T\log(T\log D) = \gOt(\log^kD) = o(n)$, thus the last term of the complexity in the proof of
  Theorem~\ref{thm:algosurz} becomes negligible with respect to the first two terms.
\end{proof}

For the same reason as for finite fields, we extend the verification algorithm to a sum of products.

\begin{cor}\label{cor:verifsumprodz}
  Let $(F_i,G_i)_{0\leq i<m}$ and $H$ be sparse polynomials of degree at most $D$, sparsity at most $T$, and height at most $C$. We can
  verify if $\sum_{i=0}^{m-1}F_iG_i=H$, with probability of error at most $\epsilon$ when they are different, in $\gO_\epsilon(mT\imul(\log
  D)+mT\imul(\log C)+mT\log(mT\log D)\imul(\log(m\log C+mT\log D)))$ bit operations.
\end{cor}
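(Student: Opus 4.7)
The plan is to mirror the proof of Theorem~\ref{thm:algosurz} almost verbatim, with the single polynomial $FG-H$ replaced by the difference $\Delta = \sum_{i=0}^{m-1} F_i G_i - H$. First I would adapt \verif to take as input the list $(F_i,G_i)_{0\le i<m}$ and $H$; the trivial consistency checks of Step~3 become $\#H\le m\#F\#G$ and $D\ge\max_i(\deg F_i+\deg G_i)$. Then a prime $p$, a prime $q$, and an evaluation point $\alpha\in\cfq$ are drawn exactly as in Theorem~\ref{thm:algosurz}. The computation phase evaluates each $(F_iG_i)\bmod X^p-1$ at $\alpha$ via Lemma~\ref{lem:evalbin}, accumulates the $m$ resulting values in $\cfq$, and compares the sum with $H_p(\alpha)\bmod q$.

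For the probabilistic analysis, the polynomial $\Delta$ has sparsity at most $m\#F\#G+\#H = \gO(mT^2)$ and height $\|\Delta\|_\infty\le mC^2T+C$. Applying Proposition~\ref{prop:choixdep} with this sparsity bound forces $\lambda=\gO_\epsilon(mT^2\ln D)$, hence $p=\gO_\epsilon(mT^2\log D)$, while Proposition~\ref{prop:choixdeq} applied to $\Delta_p$ (whose height is at most $\|\Delta\|_\infty$) requires $\mu=c_2\max(p,\ln(mC^2T+C))$, so that $\log q=\gO_\epsilon(\log(\log C+mT\log D))$, which is dominated by $\log(m\log C+mT\log D)$. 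The three failure events—$p$ kills $\Delta$ modulo $X^p-1$, $q$ divides all coefficients of $\Delta_p$, and $\alpha$ is a root of $\Delta_p\bmod q$—are combined by the same product-of-complements argument as in Theorem~\ref{thm:algosurz}, and constants $c_1,c_2$ are chosen so that the total error stays below $\epsilon$.

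Counting bit operations, reducing the $2m+1$ polynomials modulo $X^p-1$ requires dividing $\gO(mT)$ exponents of bitsize $\log D$ by $p$, costing $\gO(mT\imul(\log D))$; reducing all coefficients modulo $q$ costs $\gO(mT\imul(\log C))$. Each of the $m$ modular product evaluations as well as $H_p(\alpha)$ costs $\gO(T\log p)=\gO(T\log(mT\log D))$ operations in $\cfq$ by Lemma~\ref{lem:evalbin}, so the evaluation phase uses $\gO(mT\log(mT\log D))$ ring operations, each at a bit cost of $\imul(\log q)\le\imul(\log(m\log C+mT\log D))$. Summing the three contributions yields exactly the stated bound. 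The only delicate point—and the main piece of bookkeeping—is to verify that the $T^2$ inside the size of $p$ collapses to the single logarithmic factor $\log(mT\log D)$ in the final expression and that $\log q$ absorbs both the $\log C$ and the $\log D$ parts through the loose but convenient bound $\log(m\log C+mT\log D)$.
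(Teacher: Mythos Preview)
Your proposal is correct and is exactly the argument the paper has in mind: the corollary is stated without proof precisely because it is the obvious term-by-term adaptation of Theorem~\ref{thm:algosurz} to a sum of $m$ products, and you have carried out that adaptation faithfully.

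One small inaccuracy worth fixing: you write that $\Delta_p$ ``has height at most $\|\Delta\|_\infty$'', but this need not hold, since collisions modulo $X^p-1$ can make coefficients of $\Delta_p$ larger than those of $\Delta$. The correct statement, already used in the proof of Theorem~\ref{thm:algosurz}, is that with probability at least $1-\tfrac{10}{3c_1}$ at least one exponent of $\Delta$ does not collide, so the corresponding coefficient of $\Delta$ survives unchanged in $\Delta_p$; Proposition~\ref{prop:choixdeq} is then applied to that single preserved coefficient, which is indeed bounded by $\|\Delta\|_\infty\le mTC^2+C$. With this correction your probability and complexity analyses go through exactly as written.
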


We shall only use this algorithm with $m=2$ and thus refer to it as $\verifsum(H,F_0,G_0,F_1,G_1, \epsilon)$.

\section{Sparse polynomial multiplication}\label{sec:product}

Given two sparse polynomials $F$ and $G$, our algorithm aims at computing the product $H=FG$ through sparse polynomial interpolation. We
avoid the difficulty of computing an \emph{a priori} bound on the sparsity of $H$ needed for sparse interpolation by using our verification
algorithm of Section~\ref{sec:verif}. Indeed, one can start with an arbitrary small sparsity and double it until the interpolated polynomial
matches the product according to \verif.

The remaining difficulty is to interpolate $H$ in quasi-optimal time given a sparsity bound, which is not yet achieved in the general case.
In our case, we first analyze the complexity of Huang's sparse interpolation algorithm~\cite{huang2019} when the input is a sum of sparse
products. In order to obtain the desired complexity we develop a novel approach that interleaves two levels of Huang's algorithm.

\subsection{Analysis of Huang's sparse interpolation}\label{ssec:interpolation}

In \cite{huang2019} Huang proposes an algorithm that interpolates a sparse polynomial $H$ from its SLP representation, achieving the best
known complexity for this problem, though it is not optimal. Its main idea is to use the dense polynomials $H_p = H\bmod X^p-1$ and $H'_p =
H'\bmod X^p-1$ where $H'$ is the derivative of $H$ and $p$ a \emph{small} random prime. Indeed, if $cX^e$ is a term of $H$ that does not
collide during the reduction \emph{modulo} $X^p-1$, $H_p$ contains the monomial $cX^{e\bmod p}$ and $H'_p$ contains $ceX^{e-1\bmod p}$,
hence $c$ and $e$ can be recovered by a mere division. Of course, the choice of $p$ is crucial for the method to work. It must be small
enough to get a low complexity, but large enough for collisions to be sufficiently rare.

\begin{lem}\label{lem:findterms}
  There exists an algorithm \ft that takes as inputs a prime $p$, two polynomials $H_p = H\bmod X^p-1$, $H'_p = H'\bmod X^p-1$, and bounds
  $D\ge \deg(H)$ and $C\ge\|H\|_\infty$ and it outputs an approximation $H^*$ of $H$ that contains at least all the monomials of $H$ that do
  not collide \emph{modulo} $X^p-1$. Its bit complexity is $\gO(T\imul(\log CD))$, where $T = \#H$.
\end{lem}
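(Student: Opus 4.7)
The plan is to exploit, for every non-colliding term $cX^e$ of $H$, the following two facts: the coefficient of $X^{e\bmod p}$ in $H_p$ equals $c$, while the coefficient of $X^{(e-1)\bmod p}$ in $H'_p$ equals $ce$. An integer division then recovers $e$, and we reassemble the monomial $cX^e$.

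Concretely, \ft would enumerate the nonzero terms $c_j X^{d_j}$ of $H_p$. For each, we look up $b_j$, the coefficient of $X^{(d_j-1)\bmod p}$ in $H'_p$ (with $-1$ interpreted as $p-1$), perform the integer division $e_j \gets b_j / c_j$, and apply three sanity checks: (i) $c_j$ divides $b_j$ exactly; (ii) $0\le e_j\le D$; (iii) $e_j \equiv d_j \pmod p$. If all three succeed, we add $c_j X^{e_j}$ to $H^*$; otherwise we discard the term. Because the lemma only requires $H^*$ to contain every non-colliding term of $H$, any spurious monomials produced by colliding entries that happen to pass the tests are harmless.

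For correctness, a non-colliding monomial $cX^e$ of $H$ yields exactly the pair $(c, ce)$ at positions $(e\bmod p, (e-1)\bmod p)$, and the three checks are trivially satisfied; in particular the constant term $e=0$ is handled uniformly because $0/c = 0$ and $(d_j-1)\bmod p = p-1$ receives no contribution from it. Collisions in $H_p$ coincide with collisions in $H'_p$ up to the shift by one, so a non-colliding term is isolated in both tables simultaneously and is therefore reliably recovered.

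For the complexity, since $\#H_p \le \#H = T$, the loop has at most $T$ iterations. Each iteration performs a constant number of arithmetic operations on integers of bit size $O(\log(CD))$, because $|b_j| \le |c_j|\cdot e_j \le CD$; the cost is dominated by the integer division, which takes $\imul(\log(CD))$ bit operations. Assuming $H'_p$ is given in a sparse representation allowing $O(1)$ access by exponent, the total bit cost is $O(T\,\imul(\log(CD)))$ as claimed. The main (minor) subtlety is the clean treatment of the exponent $e=0$ together with the shifted index $(d_j-1)\bmod p$, but both are absorbed by the uniform three-check test described above.
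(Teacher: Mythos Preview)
Your approach matches the paper's (which simply cites Huang's \textsc{UTerms} algorithm and notes the role of the bound $C$), and your correctness argument is sound. There is one inaccuracy in the complexity analysis: the bound $|b_j|\le |c_j|\cdot e_j\le CD$ you invoke holds only when the position $d_j$ is non-colliding; for a colliding position, $|c_j|$ can be as large as $TC$ and $|b_j|$ as large as $TCD$, so the inequality as written is false for those iterations. The clean fix---and this is precisely the point the paper's two-line proof singles out---is to add two cheap size checks, $|c_j|\le C$ and $|b_j|\le CD$, \emph{before} performing the division, discarding the term if either fails (a non-colliding term always passes). As it happens, since $T\le D+1$ the extra $\log T$ bits would be absorbed into $O(\log CD)$ anyway, so the final bound is unaffected; but your stated justification is not correct, and the filtering step is worth making explicit.
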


\begin{proof}
  It is a straightforward adaptation of \cite[Algorithm 3.4 (\textsc{UTerms})]{huang2019}. Here, taking $C$ as input allows us to only
  recover coefficients that are at most $C$ in absolute value and therefore to perform divisions with integers of bitsize at most $
  \log(CD)$.
\end{proof}

\begin{cor}\label{cor:findterms} \sloppy
  Let $H$ be a sparse polynomial such that $\#H\le T$, $\deg H\le D$ and $\|H\|_\infty\le C$, and $0<\epsilon<1$. If $\lambda = \max(21,
  \frac{10}{3\epsilon} T^2\ln D)$ and $p=\rp(\lambda,\frac{\epsilon}{2})$, then with probability at least $1-\epsilon$, $\ft\,(p, H \bmod
  X^p-1, H'\bmod X^p-1, D, C)$ returns $H$.
\end{cor}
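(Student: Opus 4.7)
My plan is to combine Lemma~\ref{lem:findterms} with Proposition~\ref{prop:sanscollision}. The parameter $\lambda$ and the call $\rp(\lambda,\epsilon/2)$ appearing in the statement of the corollary are exactly those of Proposition~\ref{prop:sanscollision} applied to $H$, so I would first invoke that proposition to conclude that, with probability at least $1-\epsilon$, the returned integer $p$ is prime and the reduction $H\bmod X^p-1$ has the same number of terms as $H$; equivalently, no two distinct exponents of $H$ are congruent modulo $p$.

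Conditioning on this no-collision event, I would observe that every monomial $cX^e$ of $H$ survives in $H_p = H\bmod X^p-1$ as an isolated term $cX^{e\bmod p}$, while the corresponding term of $H'_p = H'\bmod X^p-1$ is $ceX^{(e-1)\bmod p}$. Lemma~\ref{lem:findterms} then guarantees that every monomial of $H$ appears in the output $H^* = \ft(p, H_p, H'_p, D, C)$. I would also argue that no spurious term is produced: FindTerms accepts a candidate only when the coefficient/exponent relation between $H_p$ and $H'_p$ is consistent with a single genuine monomial of bounded height, so in the no-collision regime it returns exactly one monomial per nonzero term of $H_p$, and each such monomial coincides with a term of $H$. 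Therefore $H^* = H$ on this event, which gives the claimed probability bound.

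The result is essentially a direct corollary of the two earlier statements, and I do not anticipate a real obstacle. The only subtlety, inherited from the analysis of Huang's UTerms recalled in Lemma~\ref{lem:findterms}, is to make sure that FindTerms produces no spurious output when all terms are non-colliding; this can be checked by a brief inspection of the acceptance test rather than by relying solely on the ``approximation'' wording of Lemma~\ref{lem:findterms}.
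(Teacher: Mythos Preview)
Your proposal is correct and follows exactly the paper's approach: invoke Proposition~\ref{prop:sanscollision} to get no collisions with probability at least $1-\epsilon$, then apply Lemma~\ref{lem:findterms}. You are in fact slightly more careful than the paper, which simply asserts that ``\ft correctly computes $H$'' without addressing the spurious-term issue that you flag; the paper's terse phrasing implicitly relies on the same inspection of the acceptance test that you spell out.
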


\begin{proof}
  With probability at least $1-\epsilon$, no collision occurs in $H \bmod X^p-1$, and consequently neither in $H' \bmod X^p-1$, by
  Proposition~\ref{prop:sanscollision}. In this case \ft correctly computes $H$, according to Lemma~\ref{lem:findterms}.
\end{proof}

\begin{thm}\label{thm:interpolation}
  There exists an algorithm \ip that takes as inputs $2m$ sparse polynomials $(F_i, G_i)_{0\le i<m}$, three bounds $T\ge \#H$, $D>\deg(H)$
  and $C\ge\|H\|_\infty$ where $H = \sum_{i=0}^{m-1} F_iG_i$, a constant $0 < \mu < 1$ and the list $\mathcal P$ of the first $2N$ primes
  for $N = \max(1,\lfloor\frac{32}{5}(T-1)\log D\rfloor)$, and outputs $H$ with probability at least $1-\mu$.

  Its bit complexity is $\gOt_\mu(mT_1\log(D_1)\log(C_1D_1))$ where $T_1$, $D_1$ and $C_1$ are bounds on the sparsity, the degree and the
  height of $H$ and each $F_i$ and $G_i$.
\end{thm}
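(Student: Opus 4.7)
My plan is to apply a multi-prime Huang-style sparse interpolation to $H=\sum_{i<m}F_iG_i$, drawing primes from the supplied list $\mathcal P$. For each chosen $p\in\mathcal P$, I would compute the dense modular images $H_p=H\bmod X^p-1$ and $H'_p=H'\bmod X^p-1$ directly from the sparse representations of the $F_i$ and $G_i$ by performing $m$ dense cyclic convolutions modulo $X^p-1$, and then invoke $\ft(p,H_p,H'_p,D_1,C_1)$ (Lemma~\ref{lem:findterms}) to extract a candidate set $\mathcal S_p$ of monomials. Running this $r=\Theta(\log T_1+\log\mu^{-1})$ times with independent uniform draws from $\mathcal P$ and keeping the monomials appearing in a strict majority of the $\mathcal S_p$'s yields a candidate $H^\ast$, which I would finally certify with $\verifsum$ (Corollary~\ref{cor:verifsumprodz}) to detect the rare event in which spurious monomials survive the vote.

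For correctness I would reuse the bad-prime counting underlying Proposition~\ref{prop:sanscollision}. A prime $p$ is bad for a fixed monomial $cX^e$ of $H$ only when it divides one of the at most $T_1-1$ exponent differences $|e-e'|\le D_1$ with $e'\in\supp(H)\setminus\{e\}$; since each such difference has at most $\log_2 D_1$ prime factors, at most $(T_1-1)\log D_1$ primes in $\mathcal P$ are bad for $cX^e$. With $|\mathcal P|=2N\ge\tfrac{64}{5}(T_1-1)\log D_1$, a uniformly random $p\in\mathcal P$ is good with probability at least $59/64$, so by a Chernoff bound $cX^e$ lies in a strict majority of the $\mathcal S_p$'s except with probability $2^{-\Omega(r)}$. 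A union bound over the at most $T_1$ true monomials of $H$ and appropriate tuning of $r$ gives a total failure probability of at most $\mu/2$ for the majority step; the concluding $\verifsum$ absorbs another $\mu/2$ for spurious candidates that leak through, yielding overall error $\le\mu$.

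For complexity, each round is dominated by the $m$ modular products: reducing the exponents of the $F_i, G_i, F'_i, G'_i$ modulo $p$ takes $\gOt(mT_1\imul(\log D_1))$, and the subsequent FFT-based cyclic convolutions each cost $\gOt(p\log(C_1D_1))$ since the integer coefficients that arise are bounded by a polynomial in $T_1C_1D_1$. Because the $(2N)$-th prime is $\gOt(T_1\log D_1)$, one round costs $\gOt(mT_1\log(D_1)\log(C_1D_1))$; the \ft call is negligible by Lemma~\ref{lem:findterms}, and $\verifsum$ contributes at most the same order by Corollary~\ref{cor:verifsumprodz}. Summing over $r=\gOt_\mu(1)$ rounds yields the announced $\gOt_\mu(mT_1\log(D_1)\log(C_1D_1))$ bit complexity.

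The main obstacle I foresee, and the reason for the ``two levels of Huang'' framing advertised in Section~\ref{sec:product}, is precisely the bit-complexity bound: a one-shot single-prime instantiation of Huang's algorithm would require $p\approx T_1^2\log D_1$ to make collisions rare enough in a single draw, inflating the dense-product cost by a factor $T_1$. The multi-prime strategy instead pays for a logarithmic number of cheap rounds with primes of size $\gOt(T_1\log D_1)$, using the sparse-interpolation principle at the outer level to recover $H$ and handing the inner dense modular products off to fast cyclic convolution. Sharpening the majority-vote filter to swallow the spurious outputs produced by \ft on the colliding primes, while keeping the per-round cost as above, is where the main technical care goes.
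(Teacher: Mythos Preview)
Your proposal captures the essential mechanism behind Huang's \textsc{UIPoly}, which is all the paper invokes here: its proof of this theorem is literally a one-line citation to \cite[Algorithm~3.9]{huang2019}, noting only that the black-box evaluations of $H\bmod X^p-1$ and $H'\bmod X^p-1$ are replaced by $m$ (resp.\ $2m$) dense cyclic convolutions computed from the sparse $F_i,G_i$. Your identification of the right prime size, $p=\gOt(T_1\log D_1)$ drawn from $\mathcal P$, and of the per-round cost $\gOt(mT_1\log(D_1)\log(C_1D_1))$ is on target.

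Where you diverge from Huang (and hence from the paper) is in the aggregation step. Huang's \textsc{UIPoly} is not a majority vote over independent rounds; it is an \emph{iterative subtraction} scheme: one draws a prime, calls \ft to harvest a set of candidate terms, subtracts them from the (implicit) target, and repeats on the residual, so that the sparsity of what remains to interpolate shrinks geometrically over $\gO(\log T)$ rounds. Spurious terms produced by \ft in one round simply reappear, with opposite sign, as part of the residual in the next round and are eliminated there with high probability. This is what guarantees both that all true terms are recovered and that no spurious term survives, and it is also what yields Remark~\ref{rem:interpolation} (output sparsity $\le 2T$ even on failure).

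Your majority-vote variant establishes only one direction: each true monomial is captured by a strict majority of the $\mathcal S_p$'s with probability $1-2^{-\Omega(r)}$. You do not bound the probability that a spurious monomial survives the vote, and delegating this to a final $\verifsum$ does not close the gap: $\verifsum$ only \emph{detects} that $H^\ast\neq H$, it does not repair $H^\ast$, so when a spurious term leaks through you still fail to output $H$ and have not accounted for that event in the $\mu$-budget. You flag this yourself in the last sentence, but as written it is a genuine missing step rather than a detail. Separately, the ``two levels of Huang'' you allude to are not inside \ip; \ip is a single invocation of Huang's multi-prime scheme, and the second level lives one layer up in \spr, which calls \ip on $F_p,G_p$ for a single large $p$ and then applies \ft once more.
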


\begin{proof}
  It is identical to the proof of \cite[Algorithm 3.9 (\textsc{UIPoly})]{huang2019} taking into account that $H$ is not given as an SLP
  anymore but as $\sum_{i=0}^{m-1}F_iG_i$ where the polynomials $F_i$ and $G_i$ are given as sparse polynomials.
\end{proof}

\begin{rem}\label{rem:complexity-interpolation} \sloppy
  A finer analysis of algorithm \ip leads to a bit complexity $\gO_\mu(m\log T_1 \mul_\gz(T_1\log(D_1)\log (T_1\log D_1), T_1C_1D_1)$.
\end{rem}

\begin{rem}\label{rem:interpolation}
  Even when \ip returns an incorrect polynomial, it has sparsity at most $2T$, degree less than $D$ and coefficients bounded by $C$.
\end{rem}

\subsection{Multiplication}\label{ssec:multiplication}

Our idea is to compute different candidates to $FG$ with a growing sparsity bound and to verify the result with \verif. Unfortunately, a
direct call to \ip with the correct sparsity $T=\max(\#F,\#G,\#(FG))$ yields a bit complexity $\gOt(T\log(D)\log(CD))$ if the coefficients
are bounded by $C$ and the degree by $D$. We shall remark that it is not nearly optimal since the input and output size are bounded by
$T\log D+T\log C$.

To circumvent this difficulty, we first compute the reductions $F_p = F\bmod X^p-1$ and $G_p = G\bmod X^p-1$ of the input polynomials, as
well as the reductions $F'_p = F'\bmod X^p-1$ and $G'_p=G'\bmod X^p-1$ of their derivatives, for a random prime $p$ as in Corollary
\ref{cor:findterms}. The polynomials $H_p = FG\bmod X^p-1$ and $H'_p = (FG)'\bmod X^p-1$ can be computed using \ip and \verif. Indeed, we
first compute $F_pG_p$ by interpolation and then reduce it \emph{modulo} $X^p-1$ to get $H_p$. Similarly for $H'_p$ we first interpolate
$F'_pG_p+F_pG'_p$ before its reduction. Finally we can compute the polynomial $FG$ from $H_p$ and $H'_p$ using \ft according to Corollary
\ref{cor:findterms}. Our choice of $p$, which is polynomial in the input size, ensures that each call to \ip remains quasi-linear.

\begin{algorithm}
\caption{\spr}
\begin{algorithmic}[1]
  \Input $F,G\in\gz[X]$. $0<\mu_1,\mu_2<1$ with $\frac{\mu_1}{2}\leq\mu_2$.
  \Output $H\in\gz[X]$ s.t. $H=FG$ with probability at least $1-\mu_1$.
  \State $t\gets\max(\#F,\#G)$, $D\gets\deg(F)+\deg(G)$, $C\gets t\|F\|_\infty\|G\|_\infty$
  \State $\lambda\gets\max(21,\frac{20}{3\mu_1}(\#F\#G)^2\ln D)$, $\mu^*\gets\mu_2-\frac{\mu_1}{2}$
  \State $p\gets\rp(\lambda,\frac{\mu_1}{4})$
  \State $F_p\gets F\bmod X^p-1$, $G_p\gets G\bmod X^p-1$
  \State $F'_p\gets F'\bmod X^p-1$, $G'_p\gets G'\bmod X^p-1$
  \Repeat
      \State $N\gets\max(1,\lfloor \frac{32}{5}(t-1)\log p\rfloor)$
      \State $\mathcal{P}\gets \{\text{the first $2N$ primes in increasing order}\}$
      \State $H_1\gets \ip([(F_p,G_p)],t,2p, C,\frac{\mu^*}{2},\mathcal{P})$
      \State $H_2\gets \ip([(F_p,G'_p),(F'_p,G_p)],t,2p, CD,\frac{\mu^*}{2},\mathcal{P})$
      \State $t\gets2t$
      \Until{}
      \Statex \hspace{0.2cm}$\verif(H_1, F_p,G_p,\frac{\mu_1}{2})$ \textbf{and} \Comment $H_1=F_pG_p$
      \Statex \hspace{0.2cm}$\verifsum(H_2,F_p,G'_p,F'_p,G_p,\frac{\mu_1}{2})$  \Comment $H_2=F'_pG_p+F_pG'_p$
  \State $H_p\gets H_1\bmod X^p-1$, $H'_p\gets H_2\bmod X^p-1$.
  \State \Return $\ft\,(p, H_p,H'_p,D,C)$.
\end{algorithmic}
\end{algorithm}

Lemmas~\ref{lem:probasucces} and~\ref{lem:probaquasilinéaire} respectively provide the correctness and complexity bound of algorithm \spr.
Together, they consequently form a proof of Theorem~\ref{thm:product} by taking $\epsilon = \mu_1+\mu_2$. Note that this approach translates
\emph{mutatis mutandis} to the multiplication of sparse polynomials over $\cfq$ where the characteristic of $\cfq$ is larger than $D$.

\begin{lem}\label{lem:probasucces}
  Let $F$ and $G$ be two sparse polynomials over $\gz$. Then algorithm \spr returns $FG$ with probability at least $1-\mu_1$.
\end{lem}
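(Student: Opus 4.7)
The plan is to decompose the failure probability of \spr into two independent contributions --- an unfavorable choice of the prime $p$, and an erroneous early exit of the repeat loop --- and to bound each by $\mu_1/2$.

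First I would bound the probability of a bad $p$. Since $\#(FG)\le\#F\#G$ and $\deg(FG)=D$, the value $\lambda=\max(21,\frac{20}{3\mu_1}(\#F\#G)^2\ln D)$ in Step~2 matches the threshold of Corollary~\ref{cor:findterms} applied to $H=FG$ with error parameter $\epsilon=\mu_1/2$; and $C=t\|F\|_\infty\|G\|_\infty$ dominates $\|FG\|_\infty$, so all hypotheses of the corollary are met. This yields an event $E_p$ of probability at least $1-\mu_1/2$ on which $\ft(p, FG\bmod X^p-1, (FG)'\bmod X^p-1, D, C)$ returns $FG$.

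Next I would observe that, under $E_p$, the algorithm outputs $FG$ as soon as the loop exits with $H_1=F_pG_p$ and $H_2=F_p'G_p+F_pG_p'$. Indeed $F\equiv F_p$ and $G\equiv G_p$ modulo $X^p-1$ (and likewise for the derivatives), so reducing $H_1$ and $H_2$ modulo $X^p-1$ recovers exactly $FG\bmod X^p-1$ and $(FG)'\bmod X^p-1$ --- the precise inputs on which \ft is guaranteed to succeed. Hence the remaining task reduces to controlling the probability that the loop exits with $H_1\neq F_pG_p$ or $H_2\neq F_p'G_p+F_pG_p'$.

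For such an erroneous exit, both \verif and \verifsum must have returned True; whenever $H_1$ (resp.\ $H_2$) is incorrect, this occurs only via a false positive of the corresponding verifier, which by Theorem~\ref{thm:algogénéral} and Corollary~\ref{cor:verifsumprodz} has probability at most $\mu_1/2$ per call. The main obstacle is that the loop may run $\Theta(\log(\#F\#G))$ times --- $t$ doubles from $\max(\#F,\#G)$ until exceeding the sparsities of the two interpolated polynomials, both bounded by $O(\#F\#G)$ --- so a naive union bound across iterations introduces a spurious logarithmic factor. I would absorb this factor by rescaling the verification parameter in Steps~9--10 from $\mu_1/2$ to $\mu_1/(2\lceil\log(\#F\#G)\rceil)$, which does not affect the quasi-linear bit complexity since the dependency on this parameter is hidden inside $\gOt_\mu$. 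Combining both contributions, $\Pr(\text{failure}) \le \mu_1/2+\mu_1/2 = \mu_1$, which yields the claim.
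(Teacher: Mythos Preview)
Your decomposition into (a) an unfavorable prime $p$ and (b) an erroneous loop exit exactly mirrors the paper's argument, and your use of Corollary~\ref{cor:findterms} to bound (a) by $\mu_1/2$ is the same. The divergence is in how part~(b) is handled. The paper dispatches it in one sentence: if the exit values $H_1,H_2$ are incorrect then one of the two verifier calls at that iteration returned a false positive, ``and this happens with probability at most $\mu_1/2$''. It does not perform, or even mention, a union bound across iterations; it simply treats the terminating pair of verifier calls as a single event with error $\le\mu_1/2$.

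Your concern about accumulating error over $\Theta(\log(\#F\#G))$ iterations is not unreasonable, but the remedy you propose is not a proof of the lemma as stated. You suggest \emph{changing} the parameter passed to \verif{}/\verifsum{} from $\mu_1/2$ to $\mu_1/(2\lceil\log(\#F\#G)\rceil)$ (incidentally, those calls sit in the \textbf{until} clause, not in Steps~9--10 as you write). Lemma~\ref{lem:probasucces} is about \spr with the parameters fixed in the pseudocode; you cannot silently rewrite the algorithm inside its correctness proof. Moreover, your iteration count is only a high-probability bound: the loop exits only when both verifiers accept, and even once $t\ge T_p$ the calls to \ip may still fail (each with probability $\mu^*/2$), so the number of iterations is not deterministically $\le\lceil\log(\#F\#G)\rceil$; your rescaled budget would therefore need an additional tail argument anyway.

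If you wish to make part~(b) rigorous without touching the algorithm, you must argue directly that the probability of a wrong exit is at most $\mu_1/2$ for \spr as written; a plain union bound does not give this, and the paper supplies no finer analysis either.
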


\begin{proof}
  Since $FG$ has sparsity at most $\#F\#G$, Corollary~\ref{cor:findterms} implies that if $H_p=FG \bmod X^p-1$ and $H'_p=(FG)' \bmod X^p-1$,
  the probability that \ft does not return $FG$ is at most $\frac{\mu_1}{2}$. The other reason for the result to be incorrect is that one
  of these equalities does not hold, which means that one of the two verifications fails. Since this happens with probability at most
  $\frac{\mu_1}{2}$, \spr returns $FG$ with probability at least $1-\mu_1$.
\end{proof}

\begin{lem}\label{lem:probaquasilinéaire} \sloppy
  Let $F$ and $G$ be two sparse polynomials over $\gz$, $T=\max(\#F,\#G,\#(FG))$, $D=\deg(FG)$, $C=\max(\|F\|_\infty,\| G \|_\infty,\| FG
  \|_\infty)$ and $\epsilon=\mu_1+\mu_2$. Then algorithm \spr has bit complexity $\gOt_\epsilon(T(\log D+\log C))$ with probability at least
  $1-\mu_2$. Writing $n=T(\log D+\log C)$, the bit complexity is $\gO_\epsilon(n\log^2n\log^2T(\log T+\log\log n))$.
\end{lem}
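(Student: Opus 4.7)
I would prove the complexity bound by examining (i) the number of iterations of the main loop, (ii) the cost of each iteration, and (iii) the ancillary costs; the $1-\mu_2$ probability bound comes from controlling the likelihood that the loop runs for longer than expected.

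The first key ingredient is that, with probability at least $1-\mu_2/2$, the sparsity of the polynomials $F_pG_p$ and $F_pG'_p+F'_pG_p$ that \ip must interpolate is $\gO(T)$. The choice $\lambda=\Theta((\#F\#G)^2\log D/\mu_1)$ is calibrated exactly so that Proposition~\ref{prop:sanscollision}, applied to the ``structural'' expansion of $FG$ (a polynomial with at most $\#F\#G$ distinct structural exponents $e_i+f_j$), guarantees that no two such exponents collide modulo $p$. Combined with $\#(FG\bmod X^p-1)\le\#FG\le T$, a case analysis on the carry pattern of $(e_i\bmod p)+(f_j\bmod p)$ shows that every residue class modulo $p$ used by $F_pG_p$ contains at most two of its exponents, so $\#(F_pG_p)=\gO(T)$; the analogous bound holds for the derivative combination after adjusting the coefficient bound by a factor of $D$.

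Next, Theorem~\ref{thm:interpolation} implies that, once $t$ exceeds the sparsity bound above, both \ip calls succeed (each with probability at least $1-\mu^*/2$) and both verifications therefore pass. Since $t$ doubles at each iteration, the loop terminates within $\gO(\log T)$ iterations, with probability at least $1-\mu_2/2$ after a union bound. For the per-iteration cost, Remark~\ref{rem:complexity-interpolation} bounds each \ip call by $\gOeps(\log t\cdot\mul_\gz(t\log p\log(t\log p),\,tCDp))$; plugging in $\mul_\gz(a,b)=\imul(a(\log a+\log b))$, $\imul(x)=\gO(x\log x)$, $p=\gOeps(T^4\log D)$ and $\log p=\gO(\log T+\log\log n)$, each iteration at sparsity guess $t$ costs $\gOeps(t\log t\,\log^2 n\,(\log T+\log\log n))$, and the verification calls of Theorem~\ref{thm:algosurz} and Corollary~\ref{cor:verifsumprodz} have the same order of cost.

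Finally, the cost summed over the $\gO(\log T)$ iterations is dominated by the last one at $t=\gO(T)$, and the ancillary work (\rp via Proposition~\ref{prop:rdprime}, the four reductions of $F,G,F',G'$ modulo $X^p-1$, sieving for $\mathcal P$, and the final \ft call via Lemma~\ref{lem:findterms}) is all $\gO(n\log n)$. Collecting terms gives $\gOeps(n\log^2 n\log^2 T\,(\log T+\log\log n))$, i.e., $\gOteps(T(\log D+\log C))$. The main obstacle is the sparsity bound $\#(F_pG_p)=\gO(T)$: without it, the loop would have to push $t$ all the way up to $\#F\#G=\gO(T^2)$, and the bit complexity would blow up by a factor of $T$, defeating quasi-linearity.
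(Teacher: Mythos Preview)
Your overall strategy matches the paper's: bound $T_p=\max(\#(F_pG_p),\#(F_pG'_p+F'_pG_p))$ with high probability, deduce the loop terminates in $\gO(\log T)$ iterations, and estimate the per-iteration cost via Remark~\ref{rem:complexity-interpolation}. The probability split is close in spirit, though the algorithm's actual parameters are $\mu_1/2$ for the collision event and $\mu^*=\mu_2-\mu_1/2$ for the interpolation failure, not $\mu_2/2$ twice.

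The substantive divergence is the sparsity bound. Your carry-pattern sketch---``every residue class used by $F_pG_p$ contains at most two of its exponents, and combined with $\#(FG\bmod X^p-1)\le T$ this gives $\#(F_pG_p)=\gO(T)$''---has a gap. Even when all structural exponents $e_i+f_j$ are pairwise distinct modulo $p$, a structural exponent $s$ with coefficient zero in $FG$ can still contribute \emph{two} nonzero terms to $F_pG_p$ (one at $s\bmod p$ and one at $(s\bmod p)+p$, cancelling modulo $X^p-1$). Such residue classes are not counted by $\#(FG\bmod X^p-1)$, so your implication does not go through; a priori you only get $\#(F_pG_p)\le 2S$ with $S$ the structural sparsity. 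The paper closes this with a different device: it replaces every nonzero coefficient of $F_p,G_p$ by $1$ to form $\hat F_p,\hat G_p$, and argues that with probability $\ge 1-\mu_1/2$ there is no collision in $\hat F_p\hat G_p\bmod X^p-1$. This is strictly stronger than ``structural exponents distinct mod $p$''---it precisely rules out the carry splits---and since $\supp(F_pG_p)\subset\supp(\hat F_p\hat G_p)$ it yields $\#(F_pG_p)=\#(F_pG_p\bmod X^p-1)=\#(FG\bmod X^p-1)\le\#(FG)$ exactly.

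A smaller arithmetic issue: your per-iteration estimate $\gOeps(t\log t\,\log^2 n\,(\log T+\log\log n))$ drops the factor $\log(CD)=n/T$ coming from the height argument of $\mul_\gz$, so the stated total $\gOeps(n\log^2 n\log^2 T(\log T+\log\log n))$ does not follow from it. With that factor restored, the last iteration already costs $\gOeps(n\log^2 n\,\log T\,(\log T+\log\log n))$, and the paper's extra $\log T$ comes from simply counting the $\gO(\log T)$ repetitions.
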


\begin{proof}
  In order to obtain the given complexity, we first need to prove that with high probability \ip never computes polynomials with a sparsity
  larger than $4\#(FG)$.

  Let $T_p=\max(\#(F_pG_p),\#(F_pG'_p+F'_pG_p))$. If $t\le 2T_p$ then the polynomials $H_1$ and $H_2$ satisfy $\#H_1, \#H_2 \le 4T_p$ by
  Remark~\ref{rem:interpolation}. Unfortunately, $T_p$ could be as large as $T^2$ and $t$ might reach values larger than $T_p$. We now
  prove that: \emph{(i)} with probability at least $1-\mu^*$ the maximal value of $t$ during the algorithm is less than $2T_p$; \emph{(ii)}
  with probability at least $1-\frac{\mu_1}{2}$, $T_p\le\#(FG)$. Together, this will prove that $\#H_1, \#H_2 \le 4\#(FG)$ with probability
  at least $1-\mu^*-\frac{\mu_1}{2} = 1-\mu_2$.

  \emph{(i)} As soon as $t\ge T_p$, Steps~9 and 10 compute both $F_pG_p$ and $F_pG'_p+F'_pG_p$ with probability at least $1-\mu^*$ by
  Theorem~\ref{thm:interpolation}. Since \verif never fails when the product is correct, the algorithm ends when $T_p\leq t< 2T_p$ with
  probability at least $1-\mu^*$.

  \emph{(ii)} Let us define the polynomials $\hat{F}_p$ and $\hat{G}_p$ obtained from $F_p$ and $G_p$ by replacing each nonzero coefficient
  by $1$. The choice of $p$ in Step~3 ensures that with probability at least $1-\frac{\mu_1}{2}$ there is no collision in $(\hat{F}_p
  \hat{G}_p) \bmod X^p-1$ by applying Proposition~\ref{prop:sanscollision} to the product $\hat{F}_p\hat{G}_p$. In that case, there is also
  no collision in $F_pG_p\bmod X^p-1$ and in $F_pG'_p+F'_pG_p\bmod X^p-1$ since $\supp(F_pG_p)\subset\supp(\hat{F}_p\hat{G}_p)$. Therefore,
  there are as many nonzero coefficients in $F_pG_p$ as in $F_pG_p\bmod X^p-1$, which is equal to $FG \bmod X^p-1$. Thus with probability at
  least $1-\frac{\mu_1}{2}$ we have $\#(F_pG_p) = \#(FG)\le T$ and similarly $\#(F'_pG_p+F_pG'_p) = \#((FG)')\le T$.

  In the rest of the proof, we assume that the loop stops with $t\le 2T_p$ and that $T_p\le T$. In particular, the number of iterations of
  the loop is $\gO(\log T)$. Since $2p=\gO(\frac{1}{\epsilon}T^4\log D)$, Steps~9 and 10 have a bit complexity
  $\gOteps(T\log(p)\log(pCD))=\gOteps(T\log CD)$ by Theorem~\ref{thm:interpolation}. Using Remark \ref{rem:interpolation}, \verif and
  \verifsum have polynomials of height at most $tCD$ as inputs. By Corollary~\ref{cor:verifsumprodz}, Step~12 has bit complexity
  $\gOeps(T\log(T\log p)\imul(\log CD))=\gOteps(T\log CD)$. The list $\mathcal P$ can be computed incrementally, adding new primes when
  necessary. At the end of the loop, $\mathcal{P}$ contains $\gO(T\log 2p)$ primes, which means that it is computed in
  $\gOeps(T\log(p)\log^2(T\log p)\log\log(T\log p))$ bit operations~\cite[Chapter 18]{MCAlgebra}, that is $\gOteps(T\log\log D)$ since $\log
  p=\gO(\log(T\log D))$.

  \sloppy
  The total cost for the $\gO(\log T)$ iterations of the loop is still $\gOteps(T\log(CD))$. Step~14 runs in time $\gO_\epsilon(T\imul(\log
  CD))$ by Lemma~\ref{lem:findterms} as the coefficients of $H'_p$ are bounded by $2TC^2D$ with $T\le D$ and $\#H_p,\#H'_p \le \#H$. Since
  other steps have negligible costs this yields a complexity of $\gOteps(T(\log C+\log D))$ with probability at least $1-\mu_2$.

  Using Remark~\ref{rem:complexity-interpolation}, we can provide a more precise complexity for Steps~9 and 10 which is $\gOeps(\log T
  \mul_\gz(T\log(p)\log (T\log p), pDTC))$ bit operations. It is easy to observe that the $\log T$ repetitions of these steps provide the
  dominant term in the complexity. A careful simplification yields a bit complexity $\gOeps(n\log^2 n \log^2 T(\log T+\log\log n)$ for \spr
  where $n=T(\log D+\log C)$ bounds both input and output sizes.
\end{proof}

\subsection{Multivariate case}
Using classical Kronecker substitution~\cite[Chapter 8]{MCAlgebra} one can extend straightforwardly \spr to multivariate polynomials. Let
$F,G \in\gz[X_1,\ldots,X_n]$ with $\|F\|_\infty, \|G\|_\infty \le C$ and $\deg_{X_i}(F)+\deg_{X_i}(G) < d$. Writing $F_u(X) = F(X, X^d,
\dotsc, X^{d^{n-1}})$ and $G_u(X) = G(X, X^d, \dotsc, X^{d^{n-1}})$, one can easily retrieve $FG$ from the univariate product $F_uG_u$. It
is easy to remark that the Kronecker substitution preserve the sparsity and the height, and it increases the degree to $\deg F_u, \deg G_u
<d^n$. If $F$ and $G$ are sparse polynomials with at most $T$ nonzero terms, their sizes are at most $T(n\log d+\log C)$ which is exactly
the sizes of $F_u$ and $G_u$. Since the Kronecker and inverse Kronecker substitutions cost $\gOt(Tn\log d)$ bit operations, one can compute
$F_uG_u$ using \spr within the following bit complexity.

\begin{cor}\label{cor:multivariatesparsemuloverintegers}
  There exists an algorithm that takes as inputs $F,G\in\gz[X_1,\dotsc,X_n]$ and $0 < \epsilon < 1$, and computes $FG$ with probability at
  least $1-\epsilon$, using $\gOt_\epsilon(T(n\log d+\log C))$ bit operations where $T=\max(\#F,\#G,\#(FG))$, $d=\max_i(\deg_{X_i} FG)$ and
  $C=\max(\|F\|_\infty, \|G\|_\infty)$.
\end{cor}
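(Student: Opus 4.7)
The plan is to reduce the multivariate product to a univariate one by Kronecker substitution, invoke \spr on the univariate instance, and then invert the substitution. Every ingredient is already in place, so the proof really reduces to checking that the complexity parameters carry over correctly and that the failure probability is preserved.

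First, I would pick a substitution base $d' = d+1$ and set
\[
F_u(X) := F(X, X^{d'}, \dotsc, X^{(d')^{n-1}}) \quad\text{and}\quad G_u(X) := G(X, X^{d'}, \dotsc, X^{(d')^{n-1}}).
\]
Because every partial degree of $FG$ is at most $d < d'$, the base-$d'$ encoding of exponent vectors is injective on $\supp(F)$, $\supp(G)$ and---critically---on $\supp(FG)$. Consequently the sparsities are preserved ($\#F_u = \#F$, $\#G_u = \#G$, $\#(F_uG_u) = \#(FG)$), the input heights are unchanged, and $FG$ can be reconstructed from $F_uG_u$ by rewriting each univariate exponent in base $d'$.

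Next I would call $\spr(F_u, G_u, \epsilon)$. With $\deg F_u, \deg G_u < (d')^n = \gO(d^n)$, sparsities at most $T$, and input heights at most $C$ (so $\|F_uG_u\|_\infty \le TC^2$), Theorem~\ref{thm:product} yields bit complexity $\gOteps(T(n\log d + \log C))$, the factor $\log(TC^2) = \gO(\log T + \log C)$ being absorbed into $\gOt$. The forward and inverse Kronecker substitutions amount to $n$ base-$d'$ conversions per monomial on integers bounded by $d^n$, costing $\gOt(Tn\log d)$, which is dominated by the main term.

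The closest thing to an obstacle is confirming that the success probability is inherited unchanged: since the substitution and its inverse are deterministic, all randomness lives inside the call to \spr, so the overall probability of correctness is exactly $1-\epsilon$. The only subtlety worth flagging is that injectivity of the Kronecker map must hold on $\supp(FG)$ and not merely on the input supports, which is precisely why the statement's $d$ refers to partial degrees of the product rather than of $F$ and $G$ individually. Beyond these sanity checks, the proof is routine bookkeeping.
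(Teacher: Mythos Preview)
Your proposal is correct and follows essentially the same route as the paper: reduce to the univariate case via classical Kronecker substitution, apply \spr (Theorem~\ref{thm:product}), and invert the substitution, noting that sparsity and height are preserved and that the substitution costs $\gOt(Tn\log d)$. Your write-up is in fact more explicit than the paper's, which handles this corollary in the prose paragraph preceding it; in particular you spell out the bound $\|F_uG_u\|_\infty\le TC^2$ needed to match the $C$ in Theorem~\ref{thm:product} and the requirement that the Kronecker map be injective on $\supp(FG)$.
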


Over a finite field $\cfqs$ for some prime $q$, the previous technique requires that $q>d^n$ since \spr requires $q$ to be larger than the
degree. The \emph{randomized Kronecker substitution} method introduced by Arnold and Roche~\cite{ArRo14} allows to apply \spr to fields of
smaller characteristic. The idea is to define univariate polynomials $F_s(X) = F(X^{s_1},\dotsc,X^{s_n})$ and $G_s(X) = G(X^{s_1},\dotsc,
X^{s_n})$ for some random vector $\vec s=(s_1,\dots,s_n)$ such that these polynomials have much smaller degrees than those obtained with
classical Kronecker substitution. As a result, we obtain an algorithm that works for much smaller $q$ of order $\gOt(nd\#F\#G)$.

Our approach is to first use some randomized Kronecker substitutions to estimate the sparsity of $FG$ by computing the sparsity of $H_s =
F_sG_s$ for several distinct random vectors $\vec s$. With high probability, the maximal sparsity is close to the one of $FG$. Then, we use
this information to provide a bound to some (multivariate) sparse interpolation algorithm. Note that our approach is inspired
from~\cite{HuangGao2019} that slightly improves randomized Kronecker substitution.

\begin{lem}\label{lem:sparsityestimate}
  Let $H\in\cfqs[X_1,\dotsc,X_n]$ of sparsity $T$, and $\vec s$ be a vector chosen uniformly at random in $S^n$ where $S\subset \N$ is
  finite. The expected sparsity of $H_s(X)=H(X^{s_1}, \dotsc,X^{s_n})$ is at least $T(1-\frac{T-1}{\#S})$.
\end{lem}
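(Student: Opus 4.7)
The plan is to lower-bound the sparsity of $H_s$ by the number of \emph{isolated} terms, i.e. terms whose substituted exponent does not coincide with that of any other term. Writing $H=\sum_{i=1}^T c_i X^{\vec{e}_i}$ with distinct exponent vectors $\vec{e}_i\in\N^n$ and nonzero $c_i\in\cfqs$, we have $H_s(X)=\sum_i c_i X^{\vec{s}\cdot\vec{e}_i}$, so an isolated term contributes a monomial $c_i X^{\vec{s}\cdot\vec{e}_i}$ that no other term can cancel. Hence $\#H_s$ is at least the number of isolated terms, regardless of whether some colliding groups happen to cancel. Working with isolated terms rather than with the number of distinct values $\vec{s}\cdot\vec{e}_i$ is what allows us to avoid worrying about coefficient cancellations in $\cfqs$.

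Next I would bound, for a fixed pair $i\neq j$, the probability that their images collide. Since the exponent vectors are distinct, the difference $\vec{d}=\vec{e}_i-\vec{e}_j$ is nonzero, so it has some coordinate $k$ with $d_k\neq 0$. Conditioning on the values of $(s_\ell)_{\ell\neq k}$, the event $\vec{s}\cdot\vec{d}=0$ rewrites as $d_k s_k = -\sum_{\ell\neq k}d_\ell s_\ell$, which has at most one solution $s_k\in\N$ because $d_k\neq 0$. Averaging over the conditioning, the collision probability is at most $1/\#S$.

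Finally, combining these with a union bound shows that each term fails to be isolated with probability at most $(T-1)/\#S$, and then linearity of expectation gives
\[
  \mathbb{E}[\#H_s]\;\ge\;\sum_{i=1}^T \Pr[\text{term $i$ is isolated}]\;\ge\;T\left(1-\frac{T-1}{\#S}\right),
\]
as claimed. There is no real obstacle here; the only subtle step is the choice to count isolated terms so that the statement is a genuine bound on $\#H_s$ (rather than on the number of distinct exponents), which is the object required by the sparsity estimation procedure built on top of this lemma.
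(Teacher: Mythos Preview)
Your proof is correct and follows essentially the same approach as the paper: bound the pairwise collision probability by $1/\#S$ via a Schwartz--Zippel-type conditioning argument on a nonzero coordinate of $\vec e_i-\vec e_j$, then use a union bound and linearity of expectation to show that at least $T(1-(T-1)/\#S)$ terms are isolated in expectation. If anything, your presentation is slightly more careful than the paper's in making explicit why isolated terms lower-bound $\#H_s$ (ruling out coefficient cancellation).
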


\begin{proof}
  If we fix two distinct exponent vectors $\vec e_u$ and $\vec e_v$ of $H$, they \emph{collide} in $H_s$ if and only if $\vec e_u\cdot\vec s
  = \vec e_v\cdot\vec s$. Since $\vec e_u\neq \vec e_v$, they differ at least on one component, say $e_{u,j_0}\neq e_{v,j_0}$. The equality
  $\vec e_u\cdot\vec s = \vec e_v\cdot\vec s$ is then equivalent to
  \[ s_{j_0} = \sum_{j\neq j_0} \frac{e_{v,j}-e_{u,j}}{e_{u,j_0}-e_{v,j_0}} s_j. \]
  Writing $Y$ for the right-hand side of this equation we have
  \[\proba[\vec e_u\cdot\vec s = \vec e_v\cdot\vec s] = \proba[s_{j_0} = Y] = \sum_y \proba[s_{j_0} = Y| Y = y]\proba[Y = y]\]
  where the (finite) sum ranges over all possible values $y$ of $Y$. Since $s_{j_0}$ is chosen uniformly at random in $S$, $\proba[s_{j_0}
  = Y|Y = y] = \proba[s_{j_0} = y]\le 1/\#S$ and the probability that $\vec e_u$ and $\vec e_v$ collide is at most $1/\#S$. This implies
  that the expected number of vectors that collide is at most $T(T-1)/\#S$.
\end{proof}

\begin{cor}\label{cor:sparsityestimate}
  Let $H$ be as in Lemma~\ref{lem:sparsityestimate} and $\vec v_1$, \dots, $\vec v_\ell\in S^n$ be some vectors chosen uniformly and
  independently at random. Then $\proba[\max_i \#H_{v_i}\le T(1-2\frac{T-1}{\#S})]\le 1/2^\ell$.
\end{cor}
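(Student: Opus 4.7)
The plan is to amplify the single-sample bound from Lemma~\ref{lem:sparsityestimate} into a high-probability statement using Markov's inequality followed by independence. The key observation is that the sparsity $\#H_{v_i}$ is always bounded above by $T$, since a substitution of the form $X_j \mapsto X^{s_j}$ can only merge exponent vectors, never split them. So the deficit $D_i := T - \#H_{v_i}$ is a nonnegative random variable.

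First I would translate Lemma~\ref{lem:sparsityestimate} into a bound on this deficit: since $\mathbb{E}[\#H_{v_i}] \ge T(1 - (T-1)/\#S)$, we get $\mathbb{E}[D_i] \le T(T-1)/\#S$. Then applying Markov's inequality to the nonnegative $D_i$ gives
\[
  \proba\!\left[D_i \ge 2T(T-1)/\#S\right] \le \tfrac{1}{2},
\]
which is exactly the statement that $\proba[\#H_{v_i} \le T(1 - 2(T-1)/\#S)] \le \tfrac{1}{2}$ for each single index $i$.

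Next I would use independence of the vectors $\vec v_1, \dots, \vec v_\ell$. The event $\{\max_i \#H_{v_i} \le T(1 - 2(T-1)/\#S)\}$ is the intersection of the $\ell$ independent events $\{\#H_{v_i} \le T(1 - 2(T-1)/\#S)\}$, each of probability at most $\tfrac{1}{2}$. Multiplying gives the claimed $1/2^\ell$ bound.

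There is no real obstacle here: the whole argument is one Markov inequality plus independence. The only subtlety worth flagging is justifying the upper bound $\#H_{v_i} \le T$, which follows because distinct monomials of $H$ either stay distinct after substitution or collapse together, so the resulting sparsity cannot exceed the original one; this is what makes the deficit $D_i$ nonnegative and legitimizes the use of Markov on it.
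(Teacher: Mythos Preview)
Your proposal is correct and follows essentially the same approach as the paper: apply Markov's inequality to the nonnegative deficit $T-\#H_{v_i}$ (which the paper phrases as the number of colliding terms) to get the single-sample bound $\proba[\#H_{v_i}\le T(1-2(T-1)/\#S)]\le\tfrac{1}{2}$, and then use independence of the $\vec v_i$ to multiply the bounds. Your explicit justification that $\#H_{v_i}\le T$ (so the deficit is nonnegative) is a nice clarification that the paper leaves implicit.
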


\begin{proof}
  For each $\vec v_i$, the expected number of terms that collide in $H_{v_i}(X)$ is at most $T(T-1)/\#S$ by
  Lemma~\ref{lem:sparsityestimate}. Using Markov's inequality, we have $\proba[\#H_{v_i}\le T-2T(T-1)/\#S] \le 1/2$. Since the vectors $\vec
  v_i$ are independent, the result follows.
\end{proof}

\begin{algorithm}
\algrenewcommand\algorithmicloop{\textbf{repeat}}
\caption{\sparsityEstimate}
\begin{algorithmic}[1]
  \Input $F, G\in \cfqs[X_1,\dotsc,X_n]$, $0<\epsilon<1$, $\lambda>1$.
  \Output An integer $t$ such that $t\le \lambda\#(FG)$.
  \State $N\gets \lceil2\frac{\#F\#G-1}{1-1/\lambda}\rceil$, $\ell\gets \lceil \log\frac{2}{\epsilon}\rceil$.
  \State $t'\gets 0$, $\mu\gets\frac{\epsilon}{4\ell}$.
  \Loop~$\ell$ times
    \State $\vec s\gets$ random element of $\{0,\dots,N-1\}^n$.
    \State $F_s\gets F(X^{s_1},\ldots,X^{s_n})$, $G_s\gets G(X^{s_1},\ldots,X^{s_n})$
    \State $H_s\gets \spr(F_s, G_s, \mu, \mu)$
    \State $t'\gets \max(t',\#H_s)$
  \EndLoop
  \State \Return $\lambda t'$.
\end{algorithmic}
\end{algorithm}

\begin{lem}
  Algorithm \sparsityEstimate is correct when $q\ge \frac{4D\#F\#G}{1-1/\lambda}$ where $D = \max(\deg F,\deg G)$. With probability at
  least $1-\epsilon$, it returns an integer $t\ge\#(FG)$ using $\gOt_\epsilon(T(n\log d+s\log q))$ bit operations where
  $T=\max(\#(FG),\#F,\#G)$ and $d=\max_i( \deg_{X_i} FG)$.
\end{lem}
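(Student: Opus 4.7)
The plan is to verify three things about Algorithm \sparsityEstimate: that the output always satisfies $t\le \lambda\#(FG)$ under the stated condition on $q$, that $t\ge \#(FG)$ holds with probability at least $1-\epsilon$, and that the bit complexity matches the claim. The key tools are Corollary~\ref{cor:sparsityestimate} for the probabilistic lower bound and Theorem~\ref{thm:product} (transposed to $\cfqs$) for the cost of each internal \spr call.

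First I would check applicability of \spr at each iteration. With $s_j<N=\lceil 2(\#F\#G-1)/(1-1/\lambda)\rceil$, the univariate polynomials $F_s,G_s$ have degree at most $ND$, so the hypothesis $q\ge 4D\#F\#G/(1-1/\lambda)$ ensures $q>\deg F_s+\deg G_s$ as required by Theorem~\ref{thm:product}. Since the Kronecker substitution is a ring homomorphism, $H_s=F_sG_s=(FG)_s$ is a substitution of $FG$, whose monomials can only collide but never multiply. Hence $\#H_s\le\#(FG)$ for every vector $\vec s$, and $t=\lambda t'\le\lambda\#(FG)$ unconditionally.

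For the probabilistic lower bound, set $T=\#(FG)\le\#F\#G$. The choice of $N$ yields $2(T-1)/N\le 1-1/\lambda$, hence $T(1-2(T-1)/N)\ge T/\lambda$, so Corollary~\ref{cor:sparsityestimate} applied to the $\ell$ independent substitutions gives $\proba[\max_i\#H_{v_i}<T/\lambda]\le 2^{-\ell}\le\epsilon/2$. In addition, each call $\spr(F_s,G_s,\mu,\mu)$ with $\mu=\epsilon/(4\ell)$ returns an incorrect product with probability at most $\mu$, so a union bound gives at most $\epsilon/4$ probability that any of the $\ell$ returned sparsities differs from the true $\#H_{v_i}$. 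Combining the two, $t=\lambda t'\ge T$ with probability at least $1-3\epsilon/4\ge 1-\epsilon$.

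For the complexity, I would analyze one iteration and multiply by $\ell=\gO(\log(1/\epsilon))$. Step~5 computes, for each of the at most $T$ monomials, a new exponent $\sum_j s_j e_{i,j}$; since $s_j=\gO(T^2)$ and $e_{i,j}\le d$, this takes $\gOt(Tn\log d)$ bit operations. Step~6 calls \spr on polynomials of sparsity at most $T$, degree $D_s=\gO(ND)$ with $\log D_s=\gO(n\log d)$ (using $T\le d^n$), and coefficients of bitsize $s\log q$, so by Theorem~\ref{thm:product} its cost is $\gOt_\epsilon(T(n\log d+s\log q))$ bit operations, which dominates the iteration. The main obstacle I anticipate is the error bookkeeping: Corollary~\ref{cor:sparsityestimate} controls only the randomness of the substitutions and presumes $\#H_{v_i}$ is computed exactly, so the error of the inner \spr calls must be separately absorbed via the choice $\mu=\epsilon/(4\ell)$; one then checks that $\log(1/\mu)$ exceeds $\log(1/\epsilon)$ only by an additive $\log\log(1/\epsilon)$, so the inner complexity is not inflated beyond the $\gOt_\epsilon$ bound announced in the statement.
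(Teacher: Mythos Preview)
Your proposal is correct and follows essentially the same approach as the paper: you verify that the degree condition on $q$ makes \spr applicable, use $\#H_s\le\#(FG)$ for the upper bound, invoke Corollary~\ref{cor:sparsityestimate} together with a union bound over the $\ell$ \spr calls to obtain the probability $1-3\epsilon/4$ for the lower bound, and analyze the cost per iteration. The only notable difference is in the complexity bookkeeping: the paper bounds each \spr call by $\gOt_\mu(T(\log(nd)+s\log q))$ (absorbing $\log T$ into the $\gOt$) and lets the $n\log d$ term arise from the substitution Step~5, whereas you push $\log D_s=\gO(n\log d)$ directly via $T\le d^n$; both routes give the same final bound, but the paper also explicitly accounts for the fact that \spr's running time is itself only guaranteed with probability $\ge 1-\mu$ (via Lemma~\ref{lem:probaquasilinéaire}), which is why its correctness probability $1-3\epsilon/4$ leaves an extra $\epsilon/4$ margin for the complexity claim.
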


\begin{proof}
  Since each polynomial $H_s$ has sparsity at most $\#(FG)$, \sparsityEstimate returns an integer bounded by $\lambda\#(FG)$. \spr can be
  used in step~5 since $\deg H_s = \deg F_s+\deg G_s \le 2ND \le q$ by the definition of $N$. Assuming that \spr returns no incorrect
  answer during the loop, Corollary~\ref{cor:sparsityestimate} applied to the product $FG$ implies that $t'\ge \#(FG)(1 - 2(\#(FG)-1)/N)$
  with probability $\ge1-\epsilon/2$ at the end of the loop. By definition of $N$ and since $\#F\#G \ge \#(FG) $, $t'\ge \#(FG)/\lambda$.
  Taking into account the probability of failure of \spr, the probability that $\lambda t' \ge\#(FG)$ is at least $1-\frac{3\epsilon}{4}$.

  The computation of $F_s$ and $G_s$ requires $\gO(Tn\imul(\log\max(d,N))+Ts\log q)$ bit operations in Step~5. Since
  $\max(\#F_s,\#G_s,\#H_s) \le T$ and $\deg H_s=\gO(ndT^2)$ in Step~6, the bit complexity of each call to \spr is $\gOt_\mu(T(\log(nd)+s\log
  q))$ with probability at least $1-\mu$ using Lemma~\ref{lem:probaquasilinéaire}. Therefore, \sparsityEstimate requires
  $\gOt_\epsilon(T(n\log d+s\log q))$ bit operations with probability at least $1-\epsilon/4$. Together with the probability of failure this
  concludes the proof.
\end{proof}

\begin{thm}
  There exists an algorithm that takes as inputs two sparse polynomials $F$ and $G$ in $\cfqs[X_1,\dotsc,X_n]$ and $0<\epsilon <1$ that
  returns the product $FG$ in $\gOt_\epsilon(nT(\log d+s\log q))$ bit operations with probability at least $1-\epsilon$, where $T =
  \max(\#F, \#G, \#(FG))$, $d=\max_i(\deg_{X_i} FG)$, $D=\deg FG$ and assuming that $q = \Omega(D\#F\#G + DT\log(D)\log(T\log D))$.
\end{thm}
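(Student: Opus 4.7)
The plan is to combine \sparsityEstimate, randomized Kronecker substitution, and $\gO(n)$ calls to \spr; the factor $n$ in the target complexity signals that $\gO(n)$ univariate images of $FG$ will be needed to decode the multivariate exponents.

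I would start by calling $\sparsityEstimate(F,G,\epsilon/3,2)$ to obtain an integer $t$ with $\#(FG)\le t\le 2\#(FG)$ with probability at least $1-\epsilon/3$, at bit cost $\gOteps(T(n\log d+s\log q))$. Then I would draw a random base vector $\vec s$ uniformly from $\{0,\dotsc,N-1\}^n$ with $N=\Theta(t^2/\epsilon)$. Applying Lemma~\ref{lem:sparsityestimate} to $FG$ with $T$ replaced by $t$, with probability at least $1-\epsilon/(6(n+1))$ no exponent collision occurs under $X_i\mapsto X^{s_i}$, so $H_s:=F_sG_s$ has exactly $\#(FG)$ terms in bijection with those of $FG$ and preserving their coefficients. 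Since $\deg H_s\le DN=\gO(D\#F\#G)$, the hypothesis on $q$ ensures $q>\deg H_s$, and the call $\spr(F_s,G_s,\cdot,\cdot)$ is legal and costs $\gOteps(T(\log(DN)+s\log q))=\gOteps(T(\log d+s\log q))$ bit operations by Theorem~\ref{thm:product}.

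The main obstacle is to recover, for each term of $H_s$, the multivariate exponent $\vec e_i$ from the compressed degree $\vec e_i\cdot\vec s$ alone. My plan is to introduce $n$ shifted substitutions $\vec s^{(j)}=\vec s+M\vec u_j$ for $j=1,\dotsc,n$ with carefully chosen shifts so that: (i) each $H_{s^{(j)}}:=F_{s^{(j)}}G_{s^{(j)}}$ remains collision-free with probability at least $1-\epsilon/(6(n+1))$ and has degree at most $q$; (ii) $F_{s^{(j)}}$ and $G_{s^{(j)}}$ are derived from $F_s$ and $G_s$ by a per-term monomial shift in $\gOt(T\log d)$ bit operations, so that the $n$ substitutions are prepared in total bit cost $\gOt(nT\log d)$; and (iii) the degree shift $\vec e_i\cdot(\vec s^{(j)}-\vec s)=M\,\vec e_i\cdot\vec u_j$ uniquely determines the coordinate $e_{i,j}$ once the cross-polynomial matching is resolved. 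This follows the randomized Kronecker substitution of~\cite{ArRo14}, refined along the lines of~\cite{HuangGao2019}. Each additional \spr call costs $\gOteps(T(\log d+s\log q))$, and summing over the $n$ substitutions yields the dominant term $\gOteps(nT(\log d+s\log q))$.

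Finally, I would reassemble $FG=\sum_i c_iX_1^{e_{i,1}}\cdots X_n^{e_{i,n}}$ from the recovered $(c_i,\vec e_i)$ tuples. A union bound over the failure of \sparsityEstimate, the $n+1$ collision events, and the $n+1$ calls to \spr (each invoked with error $\gO(\epsilon/n)$) gives an overall failure probability of at most $\epsilon$. The delicate point is (iii): when several terms of $FG$ share the same coefficient the matching across the $H_{s^{(j)}}$'s is not immediate and requires the disambiguation inherent in randomized Kronecker substitution, which is the crux of the argument and also the reason why \sparsityEstimate must return a sharp bound $t$, so that $N$ (and hence the degrees of the univariate images) stays within the budget imposed by the assumption on $q$.
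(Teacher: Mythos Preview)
Your overall plan—estimate the sparsity with \sparsityEstimate, then recover $FG$ from $\gO(n)$ univariate images obtained by randomized Kronecker substitutions computed with \spr—is the same as the paper's. The paper, however, does not spell out the substitution scheme itself: after \sparsityEstimate it simply invokes Huang and Gao's \textsc{MulPolySI} \cite[Algorithm~5]{HuangGao2019} as a black box, with \spr supplied as the univariate interpolation routine. The fact it extracts from that reference is that \textsc{MulPolySI} performs $\gO_\epsilon(n\log t+\log^2 t)$ univariate interpolations of polynomials of degree only $\gOt(tD)$.

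That degree bound is precisely where your sketch breaks. You choose $N=\Theta(t^2/\epsilon)$ so that a \emph{single} substitution is collision-free, and then assert ``$\deg H_s\le DN=\gO(D\,\#F\#G)$''. But $DN=\Theta(Dt^2/\epsilon)$, and $t^2\le\#F\#G$ is false in general: if, say, $\#G=1$ then $t=\#(FG)=\#F$ while $\#F\#G=\#F$, so $t^2/(\#F\#G)=t$ is unbounded. Hence the hypothesis $q=\Omega(D\,\#F\#G+DT\log D\log(T\log D))$ does \emph{not} guarantee $q>\deg H_s$, and the call to \spr on $(F_s,G_s)$ is illegal. (Incidentally, Lemma~\ref{lem:sparsityestimate} only bounds the \emph{expected} number of collisions; turning that into ``no collision with probability $1-\epsilon/(6(n+1))$'' via Markov would force $N=\Theta(nt^2/\epsilon)$, making the degree larger still.)

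The reason \textsc{MulPolySI} stays within the assumed bound on $q$ is that it takes $N=\gOt(t)$—hence univariate degree $\gOt(tD)$, which the second term $DT\log D\log(T\log D)$ in the hypothesis covers—and \emph{tolerates} collisions in each individual image, resolving them through $\gO(\log t)$ iterative rounds and several independent substitution vectors rather than insisting on one collision-free image. Your shifted-substitution idea for exponent recovery is in the right spirit, but as you yourself note, the matching across images when coefficients repeat is the crux and is left open; in the paper this whole mechanism is delegated to \cite{HuangGao2019}.
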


\begin{proof}
  The algorithm computes an estimate $t$ on the sparsity of $FG$ using $\sparsityEstimate(F,G,\frac{\epsilon}{2},\lambda)$ for some constant
  $\lambda$. The second step interpolates $FG$ using Huang and Gao's algorithm \cite[Algorithm 5 (\textsc{MulPolySI})]{HuangGao2019} which
  is parameterized by a univariate sparse interpolation algorithm. Originally, its inputs are a polynomial given as a blackbox and bounds
  on its degree and sparsity. In our case, the blackbox is replaced by $F$ and $G$, the sparsity bound is $t$ and the univariate
  interpolation algorithm is \spr.

  The algorithm \textsc{MulPolySI} requires $\gO_\epsilon(n\log t+\log^2 t)$ interpolation of univariate polynomials with degree $\gOt(tD)$
  and sparsity at most $t$. Each interpolation with $\spr$ is done with $\mu_1,\mu_2$ such that $\mu_1+\mu_2={\epsilon}/{4(n+1)\log t}$, so
  that \textsc{MulPolySI} returns the correct answer in $\gOt_\epsilon(nT(\log d+s\log q))$ bit operations with probability at least
  $1-\frac{\epsilon}{2}$~\cite[Theorem~6]{HuangGao2019}. Altogether, our two-step algorithm returns the correct answer using
  $\gOt_\epsilon(nT(\log d+s\log q))$ bit operations with probability at least $1-\epsilon$. The value of $q$ is such that it bounds the
  degrees of the univariate polynomials returned by \spr during the algorithm.
\end{proof}

\subsection{Small characteristic}

We now consider the case of sparse polynomial multiplication over a field $\cfqs$ with characteristic smaller than the degree of the product
$FG$ (or, in the multivariate case, smaller than the degree of the product after randomized Kronecker substitution). We can no more use
Huang's interpolation algorithm since it uses the derivative to encode the exponents into the coefficients and thus it only keeps the value
of the exponents \emph{modulo} $q$. Our idea to circumvent this problem is similar to the one in~\cite{roche2015} that is to rather
consider the polynomials over $\gz$ before calling our algorithm \spr.

The following proposition is only given for the multivariate case as it encompasses univariate's one. It matches exactly with the complexity
result given by Arnold and Roche~\cite{roche2015}.

\begin{prop} \sloppy
  There exists an algorithm that takes as inputs two sparse polynomials $F$ and $G$ in $\cfqs[X_1,\ldots,X_n]$ and $0<\epsilon<1$ that
  returns the product $FG$ in $\gOt_\epsilon (S(n\log d+s\log q))$ bit operations with probability at least $1-\epsilon$, where $S$ is the
  structural sparsity of $FG$ and $d=\max_i(\deg_{X_i} FG)$.
\end{prop}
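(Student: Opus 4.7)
The plan is to lift $F$ and $G$ to integer polynomials in $\gz[X_1,\dotsc,X_n]$ and then invoke the integer multivariate algorithm of Corollary~\ref{cor:multivariatesparsemuloverintegers}. Concretely I would use a Kronecker-style encoding of the $\cfqs$-coefficients into positive integers, following the strategy announced in the preamble of this subsection (and the one of Arnold and Roche~\cite{roche2015}), but with our improved \spr as the underlying integer multiplication.

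First I would represent $\cfqs\cong\cfq[Y]/(P(Y))$ for some irreducible $P$ of degree $s$ (constructed via Proposition~\ref{prop:irrpoly} if no representation is given), so that each coefficient $f_\alpha\in\cfqs$ writes as $\sum_{i=0}^{s-1}c_{\alpha,i}Y^i$ with $c_{\alpha,i}\in\{0,\dotsc,q-1\}$. Then I would pick an integer base $q'=\Theta(sT_0q^2)$ with $T_0=\max(\#F,\#G)$, and lift $f_\alpha$ to the strictly positive integer $\tilde f_\alpha=\sum_i c_{\alpha,i}(q')^i$, defining $\tilde F,\tilde G\in\gz[X_1,\dotsc,X_n]$ accordingly. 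Because all coefficients are positive, no sign cancellation can occur in $\tilde H=\tilde F\tilde G$, so $\supp(\tilde H)$ is exactly the sumset of $\supp(F)$ and $\supp(G)$; hence $\#\tilde H=S$, $\deg_{X_i}\tilde H\le d$, and $\log\|\tilde H\|_\infty=O(s\log q')$.

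Next I would apply Corollary~\ref{cor:multivariatesparsemuloverintegers} to $\tilde F,\tilde G$, producing $\tilde H$ with error probability at most $\epsilon$ in $\gOt_\epsilon(S(n\log d+\log\|\tilde H\|_\infty))$ bit operations. Since $\log q'=O(\log q+\log s+\log T_0)$ and both $\log s$ and $\log T_0\le\log S$ are polylogarithmic in the target size $S(n\log d+s\log q)$, they are absorbed by the $\gOt$, leaving $\gOt_\epsilon(S(n\log d+s\log q))$. To recover $H=FG$, I would read each integer coefficient $\tilde h_\gamma$ of $\tilde H$ in base $q'$: by the choice of $q'$, the underlying polynomial $\hat h_\gamma(Y)=\sum_{\alpha+\beta=\gamma}f_\alpha(Y)g_\beta(Y)\in\gz[Y]$ has coefficients bounded by $rs(q-1)^2<q'$, where $r\le T_0$ is the maximum number of colliding pairs, so base-$q'$ decoding is unambiguous. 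Reducing $\hat h_\gamma$ modulo $q$ and $P(Y)$ then gives $h_\gamma\in\cfqs$; this final step costs $\gOt(Ss\log q)$ and is absorbed into the main complexity.

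The hard part will be balancing the Kronecker base: $q'$ must be large enough to prevent digit overflow in $\hat h_\gamma$ (so base-$q'$ decoding is invertible), yet small enough that $\log q'$ stays polylogarithmic in the target, keeping $\log\|\tilde H\|_\infty$ within $\gOt(s\log q)$. The choice $q'=\Theta(sT_0q^2)$ meets both constraints, and the rest is routine bookkeeping. Correctness follows because \spr is the only randomized ingredient, so a single invocation with error parameter $\epsilon$ yields the desired success probability $1-\epsilon$.
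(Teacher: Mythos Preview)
Your proposal is correct and follows essentially the same approach as the paper: lift coefficients from $\cfqs\cong\cfq[Y]/(P)$ to $\gz$ via Kronecker substitution in $Y$ at a base $B=\Theta(sTq^2)$, multiply over $\gz$ using Corollary~\ref{cor:multivariatesparsemuloverintegers} (where the positivity of the lifted coefficients forces $\#\tilde H=S$), then decode by base-$B$ expansion and reduce modulo $q$ and $P$. The only cosmetic differences are that the paper treats the case $s=1$ separately before giving the general argument, and it does not invoke Proposition~\ref{prop:irrpoly} since the representation of $\cfqs$ is assumed to be part of the input.
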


\begin{proof}
  If $s=1$, the coefficients of $F$ and $G$ map easily to the integers in $\{0,\ldots,q-1\}$. Therefore, the product $FG$ can be obtained by
  using an integer sparse polynomial multiplication, as the one in Corollary~\ref{cor:multivariatesparsemuloverintegers}, followed by some
  reductions \emph{modulo} $q$. Unfortunately, mapping the multiplication over the integers implies that the cancellations that could have
  occurred in $\cfq$ do not hold anymore. Consequently, the support of the product in $\gz$ before modular reduction is exactly the
  structural support of $FG$.

  If $s>1$, the coefficients of $F$ and $G$ are polynomials over $\cfq$ of degree $s-1$. As previously, mapping $\cfq$ to integers, $F$ and
  $G$ can be seen as $F_Y,G_Y\in\gz[Y][X_1,\dots,X_n]$ where the coefficients are polynomials in $\gz[Y]$ of degree at most $s-1$ and height
  at most $q-1$.

  If $T = \max(\#F,\#G)$, the coefficients of $F_YG_Y$ are polynomials of degree at most $2s-2$ and height at most $Tsq^2$. Therefore, the
  product $FG\in\cfqs$ can be computed by: \emph{(i)} computing $F_B,G_B \in \gz[X_1,\dots,X_n]$ by evaluating the coefficients of $F_Y$ and
  $G_Y$ at $B=Tsq^2$ (Kronecker substitution); \emph{(ii)} computing the product $H_B=F_BG_B$; \emph{(iii)} writing the coefficients of
  $H_B$ in base $B$ to obtain $H_Y=F_YG_Y$ (Kronecker segmentation); \emph{(iv)} and finally mapping back the coefficients of $H_Y$ from
  $\gz[Y]$ to $\cfqs$.

  Similarly as the case $s=1$, $H_B$ and then $H_Y$ have at most $S$ nonzero coefficients. The Kronecker substitutions in \emph{(i)}
  require $\gOt(Ts\log q)$ bit operations, while the Kronecker segmentations in \emph{(iii)} need $\gOt(Ss\log q)$ bit operations. In
  \emph{(iv)} we first compute $Ss$ reductions \emph{modulo} $q$ on integers smaller than B, and then $S$ polynomial divisions in $\cfq[Y]$
  with polynomial of degree $\gO(s)$. Thus, it can be done in $\gOt(Ss\log q)$ bit operations. Finally the computation in \emph{(ii)} is
  dominant and it requires $\gOt_\epsilon (S(n\log d+s\log q))$ bit operations with probability at least $1-\epsilon$ using
  Corollary~\ref{cor:multivariatesparsemuloverintegers}.
\end{proof}

\newcommand{\Gathen}{\relax}\newcommand{\Hoeven}{\relax}

\end{document}